\documentclass[11pt]{article}


\usepackage{mathpazo}
\usepackage{amsfonts}
\usepackage{amsmath}
\usepackage{graphicx}
\usepackage{latexsym}


\usepackage[margin=1.05in]{geometry}

\usepackage[T1]{fontenc}
\usepackage{times}
\usepackage{color,graphicx}
\usepackage{array}
\usepackage{enumerate}
\usepackage{amsmath}
\usepackage{amssymb}
\usepackage{amsthm}
\usepackage{pgfplots}
\usepackage{pgf}
\usepackage{tikz}
\usepackage{filecontents}
\usepgfplotslibrary{patchplots} 
\usetikzlibrary{pgfplots.patchplots} 
\pgfplotsset{width=9cm,compat=1.5.1}


\usepackage{amsthm}

\newtheorem{lemma}{Lemma}
\newtheorem{theorem}{Theorem}
\newtheorem{corollary}{Corollary}

\newtheorem{example}{Example}


\usepackage{xcolor}
\usepackage{makeidx}
\usepackage[colorlinks=true,linkcolor=blue,anchorcolor=blue,citecolor=red,urlcolor=magenta]{hyperref}


\newcommand{\tr}{\operatorname{Tr}}

\newcommand{\bra}[1]{\langle #1 |}
\newcommand{\ket}[1]{| #1 \rangle}

\newcommand{\ketbra}[2]{| #1 \rangle\langle #2 |}

\newcommand\ip[2]{\ensuremath{\langle#1,#2\rangle}}

\newcommand{\defeq}{\stackrel{\smash{\textnormal{\tiny def}}}{=}}


\begin{document}

\title{The Inverse Eigenvalue Problem for Entanglement Witnesses}

\author{
	Nathaniel Johnston\footnote{Department of Mathematics \& Computer Science, Mount Allison University, Sackville, NB, Canada E4L 1E4}\textsuperscript{$\ \ $}\footnote{Department of Mathematics \& Statistics, University of Guelph, Guelph, ON, Canada N1G 2W1}
	\quad and\quad
	Everett Patterson\footnotemark[1]
}

\date{August 20, 2017}

\maketitle
\begin{abstract}
	We consider the inverse eigenvalue problem for entanglement witnesses, which asks for a characterization of their possible spectra (or equivalently, of the possible spectra resulting from positive linear maps of matrices). We completely solve this problem in the two-qubit case and we derive a large family of new necessary conditions on the spectra in arbitrary dimensions. We also establish a natural duality relationship with the set of absolutely separable states, and we completely characterize witnesses (i.e., separating hyperplanes) of that set when one of the local dimensions is $2$.
\end{abstract}

\section{Introduction}

In linear algebra and matrix theory, an inverse eigenvalue problem asks for a characterization of the possible spectra (i.e., the ordered tuples of eigenvalues) of a given set of matrices. Perhaps the most well-known inverse eigenvalue problem asks for the possible spectra of entrywise non-negative matrices \cite{ELN04}. This is called the non-negative inverse eigenvalue problem (NIEP), and it has been completely solved for matrices of size $4 \times 4$ and smaller, but remains unsolved for larger matrices.

Several variants of the NIEP have also been investigated, where instead a characterization is asked for of the possible spectra of \emph{symmetric} non-negative matrices \cite{Sou83}, stochastic matrices \cite{DD46,Joh81}, or doubly stochastic matrices \cite{PM65,HP04}. Similarly, the inverse eigenvalue problem has been considered for Toeplitz matrices \cite{Lan94} and tridiagonal matrices \cite{PSES07}, among many others (see \cite{Chu98} and the references therein).

In this paper, we consider the inverse eigenvalue problem for entanglement witnesses, which are matrices of interest in quantum information theory that will be defined in the next section. Equivalently, we investigate what spectra can result from applying a positive matrix-valued map to just part of a positive semidefinite matrix. Such maps are of interest in operator theory (and again, we introduce the mathematical details in the next section).

The paper is organized as follows. In Section~\ref{sec:math_prelims}, we introduce the various mathematical tools that we will use throughout the paper. In Section~\ref{sec:two_qubit}, we completely solve the inverse eigenvalue problem for two-qubit entanglement witnesses (i.e., the lowest-dimension non-trivial case). In Section~\ref{sec:qubit_qudit}, we extend our investigation to qubit-qudit entanglement witnesses and obtain a large family of new necessary conditions on the spectra. In the process, we completely characterize the witnesses of the set of absolutely separable states in these dimensions. In Section~\ref{sec:high_dim_ews}, we extend our results to obtain necessary conditions on the spectra of decomposable entanglement witnesses in arbitrary dimensions. Finally, we provide closing remarks and open questions in Section~\ref{sec:conclude}.

\section{Mathematical Preliminaries}\label{sec:math_prelims}

\subsection{Separability, Entanglement, and the Partial Transpose}

From a mathematical perspective, quantum information theory (and more specifically, quantum entanglement theory) is largely concerned with properties of (Hermitian) positive semidefinite matrices and the tensor product. A \emph{pure quantum state} $\ket{v} \in \mathbb{C}^n$ is a unit (column) vector and a \emph{mixed quantum state} $\rho \in M_n(\mathbb{C})$ is a (Hermitian) positive semidefinite matrix with $\tr(\rho) = 1$ (we use $\tr(\cdot)$ to denote the trace and $M_n(\mathbb{R})$ or $M_n(\mathbb{C})$ to denote the set of $n \times n$ real or complex matrices, respectively). Whenever we use ``ket'' notation like $\ket{v}$ or $\ket{w}$, or lowercase Greek letters like $\rho$ or $\sigma$, we are implicitly assuming that they represent pure or mixed quantum states, respectively.

A mixed state $\rho \in M_m(\mathbb{C}) \otimes M_n(\mathbb{C})$ is called \emph{separable} if there exist pure states $\{\ket{v_j}\} \subseteq \mathbb{C}^m$ and $\{\ket{w_j}\} \subseteq \mathbb{C}^n$ such that
\[
	\rho = \sum_{j} p_j \ketbra{v_j}{v_j} \otimes \ketbra{w_j}{w_j},
\]
where ``$\otimes$'' is the tensor (Kronecker) product, $\bra{v}$ is the dual (row) vector of $\ket{v}$, so $\ketbra{v}{v}$ is the rank-$1$ projection onto $\ket{v}$, and $p_1,p_2,\ldots$ form a probability distribution (i.e., they are non-negative and add up to $1$). Equivalently, $\rho$ is separable if and only if it can be written in the form
\[
	\rho = \sum_j X_j \otimes Y_j,
\]
where each $X_j \in M_m(\mathbb{C})$ and $Y_j \in M_n(\mathbb{C})$ is a (Hermitian) positive semidefinite matrix. If $\rho$ is not separable then it is called \emph{entangled}.

Determining whether or not a mixed state is separable is a hard problem \cite{Gur03,Gha10}, so in practice numerous one-sided tests are used to demonstrate separability or entanglement (see \cite{GT09,HHH09} and the references therein for a more thorough introduction to this problem). The most well-known such test says that if we define the \emph{partial transpose} of a matrix $A = \sum_j B_j \otimes C_j \in M_m(\mathbb{C}) \otimes M_n(\mathbb{C})$ via
\[
	A^\Gamma := \sum_j B_j \otimes C_j^T,
\]
then $\rho$ being separable implies that $\rho^\Gamma$ is positive semidefinite (so we write $\rho^\Gamma \succeq O$) \cite{Per96}. This test follows simply from the fact that if $\rho$ is separable then $\rho = \sum_j X_j \otimes Y_j$ with each $X_j,Y_j \succeq O$, so
\[
	\rho^\Gamma = \sum_j X_j \otimes Y_j^T,
\]
which is still positive semidefinite since each $Y_j^T$ is positive semidefinite, and tensoring and adding positive semidefinite matrices preserves positive semidefiniteness. If a mixed state $\rho$ is such that $\rho^\Gamma \succeq O$ then we say that it has \emph{positive partial transpose (PPT)}, and the previous discussion shows that the set of separable states is a subset of the set of PPT states.

\subsection{Entanglement Witnesses and Positive Maps}

A straightforward generalization of the partial transpose test for entanglement is based on positive linear matrix-valued maps. A linear map $\Phi : M_m(\mathbb{C}) \rightarrow M_n(\mathbb{C})$ is called \emph{positive} if $X \succeq O$ implies $\Phi(X) \succeq O$, and the transpose is an example of one such map. Based on positive maps, we can define \emph{block-positive matrices}, which are matrices of the form $W := (id_m \otimes \Phi)(X)$ for some $O \preceq X \in M_m(\mathbb{C}) \otimes M_m(\mathbb{C})$ and some positive linear map $\Phi : M_m(\mathbb{C}) \rightarrow M_n(\mathbb{C})$. Equivalently, $W$ is block-positive if and only if $(\bra{a}\otimes\bra{b})W(\ket{a}\otimes\ket{b}) \geq 0$ for all $\ket{a} \in \mathbb{C}^m$ and all $\ket{b} \in \mathbb{C}^n$.

If $W$ is block-positive but not positive semidefinite, it is called an \emph{entanglement witness}, since it is then the case that $\tr(W\sigma) \geq 0$ for all separable $\sigma \in M_m(\mathbb{C}) \otimes M_n(\mathbb{C})$, but there exists some (necessarily entangled) mixed state $\rho \in M_m(\mathbb{C}) \otimes M_n(\mathbb{C})$ such that $\tr(W\rho) < 0$. That is, $W$ verifies or ``witnesses'' the fact that $\rho$ is entangled (geometrically, $W$ acts as a separating hyperplane that separates $\rho$ from the convex set of separable states). A matrix is called \emph{decomposable} if it can be written in the form $W = X^\Gamma + Y$, where $X,Y \succeq O$. It is straightforward to verify that every decomposable matrix is block-positive. However, the converse of this statement (i.e., every block-positive matrix is decomposable) is true if and only if $(m,n) = (2,2)$, $(2,3)$, or $(3,2)$ \cite{Sto63,Wor76}. We denote the set of block positive and decomposable matrices in $M_m(\mathbb{C}) \otimes M_n(\mathbb{C})$ by $\textup{BP}_{m,n}$ and $\textup{DBP}_{m,n}$, respectively.

Our primary interest in this work is characterizing the possible spectra of entanglement witnesses. However, it is a bit more natural to work with the set of block-positive matrices, since it is closed and convex (neither of which is true of the set of entanglement witnesses). However, this distinction does not matter much, since any spectral inequality that we obtain for the block-positive matrices can be turned into a spectral inequality for entanglement witnesses by just adding the condition ``at least one of the eigenvalues is strictly negative''. With this in mind, we are now in a position to define the two main sets that we will be investigating throughout the rest of this paper:
\begin{align*}
	\sigma(\textup{BP}_{m,n}) & \defeq \big\{(\mu_1,\mu_2,\ldots,\mu_{mn}) : \text{$\exists \ W \in \textup{BP}_{m,n}$ with eigenvalues $\mu_1,\mu_2,\ldots,\mu_{mn}$} \big\} \\
	\sigma(\textup{DBP}_{m,n}) & \defeq \big\{(\mu_1,\mu_2,\ldots,\mu_{mn}) : \text{$\exists \ W \in \textup{DBP}_{m,n}$ with eigenvalues $\mu_1,\mu_2, \ldots, \mu_{mn}$} \big\}.
\end{align*}

In words, $\sigma(\textup{BP}_{m,n})$ and $\sigma(\textup{DBP}_{m,n})$ are the sets of possible spectra of block-positive matrices and decomposable block-positive matrices, respectively. We emphasize that we do \emph{not} require the vectors in these sets to be sorted---if a particular vector is in one of those sets, then so is every vector obtained by permuting its entries. However, it will sometimes be convenient to refer to the ordered eigenvalues of a block-positive matrix, so we sometimes use the notation $\vec{\mu}^\downarrow$ to refer to the vector with the same entries as $\vec{\mu}$, but sorted in non-increasing order (i.e., $\mu_1^\downarrow \geq \mu_2^\downarrow \geq \cdots \geq \mu_{mn}^\downarrow$).

\subsection{Known Results on Spectra of Entanglement Witnesses}\label{sec:known_results}

We now summarize all known results concerning the spectrum of a (decomposable) block-positive matrix $W \in M_m(\mathbb{C}) \otimes M_n(\mathbb{C})$ that we are aware of:

\begin{itemize}
	\item $W$ has no more than $(m-1)(n-1)$ negative eigenvalues \cite{Par04,Sar08}, and this number of negative eigenvalues is attainable even if $W$ is decomposable \cite{Joh13}. In particular, if $m = n = 2$ then every entanglement witness has exactly $1$ negative eigenvalue.
	
	\item $\lambda_{\textup{min}}(W) / \lambda_{\textup{max}}(W) \geq 1-\min\{m,n\}$ \cite{JK10,Joh12}.
	
	\item If $W$ has $q$ negative eigenvalues then \cite{Joh12}:
	\begin{align*}
		\frac{\lambda_{\textup{min}}(W)}{\lambda_{\textup{max}}(W)} & \geq 1 - \frac{mn\sqrt{mn - 1}}{q\sqrt{mn - 1} + \sqrt{mnq - q^2}} \quad \text{and} \\
		\frac{\lambda_{\textup{min}}(W)}{\lambda_{\textup{max}}(W)} & \geq 1 - \left\lceil \frac{1}{2}\Big(m+n - \sqrt{(m-n)^2 + 4q - 4}\Big) \right\rceil.
	\end{align*}
	
	\item $\tr(W)^2 \geq \tr(W^2)$ \cite{SWZ08} (this is a spectral condition since $\tr(W) = \sum_j \lambda_j(W)$ and $\tr(W^2) = \sum_j \lambda_j(W)^2$).
	
	\item If $W$ is decomposable then $\lambda_{\textup{min}}(W) \geq -\tr(W)/2$ \cite{Ran13}. 
\end{itemize}

There is one family of block-positive matrices whose eigenvalues are particularly simple to analyze, and those are the matrices of the form $(\ketbra{v}{v})^\Gamma$, where $\ket{v} \in \mathbb{C}^m \otimes \mathbb{C}^n$. The following lemma is well-known (see \cite{Hil07} for example), but we prove it for completeness. Note that the lemma relies on the \emph{Schmidt coefficients} of $\ket{v}$, which are the singular values of $\ket{v}$ when it is thought of as an $m \times n$ matrix (this is a standard tool in quantum information theory, so the reader is directed to a textbook like \cite{NC00} for further details).

\begin{lemma}\label{lem:eigenvalues_rank1_pt}
	Suppose $\ket{v} \in \mathbb{C}^m \otimes \mathbb{C}^n$ and for simplicity assume that $m \leq n$. If $\ket{v}$ has Schmidt coefficients $\alpha_1 \geq \alpha_2 \geq \cdots \geq \alpha_m \geq 0$ then the matrix $(\ketbra{v}{v})^\Gamma$ has eigenvalues
	\begin{align*}
	\alpha_j^2 & \quad \text{for} \quad 1 \leq j \leq m, \\
	\pm\alpha_i\alpha_j & \quad \text{for} \quad 1 \leq i \neq j \leq m, \quad \text{and} \\
	0 & \quad \text{with extra multiplicity $m(n-m)$}.
	\end{align*}
\end{lemma}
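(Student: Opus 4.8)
The plan is to diagonalize $(\ketbra{v}{v})^\Gamma$ directly from a Schmidt decomposition of $\ket{v}$. First I would write $\ket{v} = \sum_{i=1}^m \alpha_i \ket{a_i}\otimes\ket{b_i}$, where $\{\ket{a_i}\}_{i=1}^m \subseteq \mathbb{C}^m$ and $\{\ket{b_i}\}_{i=1}^m \subseteq \mathbb{C}^n$ are orthonormal sets (so $\{\ket{a_i}\}$ is in fact a basis of $\mathbb{C}^m$). Expanding $\ketbra{v}{v}$ and applying the partial transpose on the second subsystem gives
\[
	(\ketbra{v}{v})^\Gamma = \sum_{i,k=1}^m \alpha_i\alpha_k\, \ketbra{a_i}{a_k} \otimes \big(\ketbra{b_i}{b_k}\big)^T = \sum_{i,k=1}^m \alpha_i\alpha_k\, \ketbra{a_i}{a_k} \otimes \ketbra{c_k}{c_i},
\]
where $\ket{c_i} := \overline{\ket{b_i}}$ denotes the entrywise complex conjugate, so that $\{\ket{c_i}\}_{i=1}^m$ is again orthonormal. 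The key observation is that this operator is supported on the $m^2$-dimensional subspace $\mathcal{S} := \mathbb{C}^m \otimes \operatorname{span}\{\ket{c_j} : 1 \le j \le m\}$, and that on the natural basis of $\mathcal{S}$ it acts by the simple rule $\ket{a_p}\otimes\ket{c_q} \mapsto \alpha_p\alpha_q\, \ket{a_q}\otimes\ket{c_p}$, i.e.\ a scaled exchange of the two indices.

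From here the eigenvalues read off immediately. On the orthogonal complement of $\mathcal{S}$, which has dimension $mn - m^2 = m(n-m)$, the operator is identically zero, which accounts for the stated extra multiplicity of the eigenvalue $0$. Within $\mathcal{S}$, each diagonal vector $\ket{a_p}\otimes\ket{c_p}$ is fixed up to the scalar $\alpha_p^2$, giving the eigenvalues $\alpha_j^2$ for $1\le j\le m$. For each unordered pair $\{p,q\}$ with $p\ne q$, the two-dimensional subspace $\operatorname{span}\{\ket{a_p}\otimes\ket{c_q},\ \ket{a_q}\otimes\ket{c_p}\}$ is invariant and the operator restricts there to $\alpha_p\alpha_q$ times the exchange of the two basis vectors, so $\ket{a_p}\otimes\ket{c_q} \pm \ket{a_q}\otimes\ket{c_p}$ are eigenvectors with eigenvalues $\pm\alpha_p\alpha_q$; this produces the eigenvalues $\pm\alpha_i\alpha_j$ for $i \ne j$. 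A final count confirms that these eigenvectors form a complete orthogonal set: $m$ diagonal vectors, $2\binom{m}{2} = m(m-1)$ sign-combination vectors, and $m(n-m)$ vectors spanning $\mathcal{S}^\perp$, for a total of $m^2 + m(n-m) = mn$, so we have indeed exhibited every eigenvalue with the correct multiplicity.

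This argument is essentially computational, so I do not anticipate a genuine obstacle. The only points that require care are correctly tracking the complex conjugation introduced by the partial transpose (hence the appearance of $\ket{c_i} = \overline{\ket{b_i}}$ rather than $\ket{b_i}$), and checking that the diagonal eigenvectors together with the $\pm$ combinations over all pairs really do span $\mathcal{S}$, so that no eigenvalue is omitted and the claimed multiplicities are exact.
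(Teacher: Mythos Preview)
Your proposal is correct and follows essentially the same approach as the paper: both start from the Schmidt decomposition, compute the partial transpose explicitly (tracking the complex conjugation on the second factor), and then exhibit the same eigenvectors---the diagonal vectors $\ket{a_p}\otimes\ket{c_p}$, the $\pm$ combinations on each off-diagonal pair, and a basis of the orthogonal complement for the extra zeros. Your ``scaled exchange'' framing $\ket{a_p}\otimes\ket{c_q}\mapsto\alpha_p\alpha_q\,\ket{a_q}\otimes\ket{c_p}$ is a nice way to organize the computation, but it is the same argument.
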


\begin{proof}
	By virtue of the Schimdt decomposition, we can write
	\[
		\ket{v} = \sum_{j=1}^m \alpha_j \ket{b_j} \otimes \ket{c_j},
	\]
	where $\{\ket{b_j}\} \subseteq \mathbb{C}^m$ and $\{\ket{c_j}\} \subseteq \mathbb{C}^n$ are orthonormal sets of vectors. Straightforward computation shows that
	\[
		(\ketbra{v}{v})^\Gamma = \sum_{i,j=1}^m \alpha_i \alpha_j  \ketbra{b_i}{b_j} \otimes \overline{ \ketbra{c_j}{c_i} }.
	\]
	To find the eigenvalues, we first define the following (eigen)vectors:
	\begin{align*}
		\ket{x_j} & := \ket{b_j} \otimes \overline{ \ket{c_j} } \quad \text{for} \quad 1 \leq j \leq m \qquad \text{and} \\
		\ket{y_{i,j}^\pm} & := \frac{1}{\sqrt{2}}\big(\ket{b_i} \otimes \overline{ \ket{c_j} } \pm \ket{b_j} \otimes \overline{ \ket{c_i} }\big) \quad \text{for} \quad 1 \leq i \neq j \leq m.
	\end{align*}
	Then direct computation shows that
	\begin{align*}
		(\ketbra{v}{v})^\Gamma \ket{x_j} = \alpha_j^2 \ket{x_j} \qquad \text{and} \qquad (\ketbra{v}{v})^\Gamma \ket{y_{i,j}^\pm} = \pm \alpha_i \alpha_j \ket{y_{i,j}^\pm},
	\end{align*}
	which establishes the claim about the potentially non-zero eigenvalues of $(\ketbra{v}{v})^\Gamma$. To see that there are $m(n-m)$ extra $0$ eigenvalues (for a total of $m + m(m-1) + m(n-m) = mn$ eigenvalues), first extend $\{\ket{c_j}\}$ to an orthonormal basis $\{\ket{c_1},\ldots,\ket{c_m},\ket{d_1},\ldots,\ket{d_{n-m}}\}$ of $\mathbb{C}^n$. Then define the (eigen)vectors $\ket{z_{i,j}} := \ket{b_i} \otimes \ket{d_j}$ for $1 \leq i \leq m$ and $1 \leq j \leq n-m$. It is then straightforward to check that $(\ketbra{v}{v})^\Gamma\ket{z_{i,j}} = 0\ket{z_{i,j}}$ for all $i,j$, which completes the proof.
\end{proof}

\subsection{Cones and Semidefinite Programming}

One nice feature of the sets $\sigma(\textup{BP}_{m,n})$ and $\sigma(\textup{DBP}_{m,n})$ is that they are closed (this is not hard to prove, but the argument is done explicitly for $\sigma(\textup{BP}_{m,n})$ in \cite{GCM14}) and they are \emph{cones}: if $c \in \mathbb{R}$ is a non-negative scalar and $\vec{\mu} \in \sigma(\textup{BP}_{m,n})$, then $c\vec{\mu} \in \sigma(\textup{BP}_{m,n})$ too (and similarly for $\sigma(\textup{DBP}_{m,n})$). Given a particular cone $\mathcal{C} \subseteq \mathbb{R}^n$, its \emph{dual cone} $\mathcal{C}^\circ$ is defined by
\begin{align*}
	\mathcal{C}^\circ \defeq \big\{ \vec{x} \in \mathbb{R}^n : \langle \vec{x},\vec{y}\rangle \geq 0 \ \text{for all} \ \vec{y} \in \mathcal{C} \big\}.
\end{align*}

It is well-known that the double-dual of any closed cone $\mathcal{C} \subseteq \mathbb{R}^n$ is its convex hull: $\mathcal{C}^{\circ\circ} = \mathrm{Conv}(\mathcal{C})$ \cite{BV04}. For this reason, in this work it will often be much easier to work with $\mathrm{Conv}(\sigma(\textup{BP}_{m,n}))$ and $\mathrm{Conv}(\sigma(\textup{DBP}_{m,n}))$ instead of $\sigma(\textup{BP}_{m,n})$ and $\sigma(\textup{DBP}_{m,n})$ directly. Before proceeding, it is worth demonstrating that $\sigma(\textup{BP}_{m,n})$ and $\sigma(\textup{DBP}_{m,n})$ are indeed not convex:

\begin{example}\label{exam:eigs_not_convex}
	Consider the positive semidefinite matrix
	\begin{align*}
	X = \begin{bmatrix}
	1 & 0 & 0 & 2 \\
	0 & 0 & 0 & 0 \\
	0 & 0 & 0 & 0 \\
	2 & 0 & 0 & 4
	\end{bmatrix} \in M_2(\mathbb{C}) \otimes M_2(\mathbb{C}).
	\end{align*}
	Then
	\begin{align*}
	W := X^\Gamma = \begin{bmatrix}
	1 & 0 & 0 & 0 \\
	0 & 0 & 2 & 0 \\
	0 & 2 & 0 & 0 \\
	0 & 0 & 0 & 4
	\end{bmatrix}
	\end{align*}
	is a block-positive matrix with eigenvalues $(4,2,1,-2) \in \sigma(\textup{BP}_{2,2})$. Thus
	\begin{align*}
	\frac{1}{2}(4,2,1,-2) + \frac{1}{2}(4,2,-2,1) = (4,2,-1/2,-1/2) \in \mathrm{Conv}\big(\sigma(\textup{BP}_{2,2})\big).
	\end{align*}
	But as we mentioned in Section~\ref{sec:known_results}, it is well-known that a block-positive matrix in $\textup{BP}_{2,2}$ can have at most one negative eigenvalue, which shows that $\sigma(\textup{BP}_{2,2})$ is not convex. Similar examples can also be constructed in higher dimensions. 
\end{example}

One useful tool for probing closed convex cones is \emph{semidefinite programming}, which is a method of optimizing a linear function over constraints involving positive semidefinite matrices. We do not give a full introduction to semidefinite programming here (see \cite{BV04} for such an introduction), but rather we note that it contains linear programming as a special case. For our purposes, given a linear map $L : \mathbb{R}^{m} \rightarrow M_{n}(\mathbb{R})$, a vector $\vec{v} \in \mathbb{R}^m$, and a symmetric matrix $A \in M_n(\mathbb{R})$, the associated semidefinite program is the following pair of optimization problems:
\begin{center}
	\begin{minipage}{2in}
		\centerline{\underline{Primal problem}}\vspace{-7mm}
		\begin{align*}
		\text{minimize:}\quad & \ip{\vec{v}}{\vec{x}}\\
		\text{subject to:}\quad & L(\vec{x}) \succeq A,\\
		& \vec{0} \leq \vec{x} \in \mathbb{R}^m 
		\end{align*}
	\end{minipage}
	\hspace*{1.5cm}
	\begin{minipage}{2.4in}
		\centerline{\underline{Dual problem}}\vspace{-7mm}
		\begin{align}\begin{split}\label{sdp:form}
		\text{maximize:}\quad & \tr(AY)\\
		\text{subject to:}\quad & L^{*}(Y) \leq \vec{v},\\
		& O \preceq Y \in M_n(\mathbb{R}),
		\end{split}\end{align}
	\end{minipage}
\end{center}
where $L^* : M_{n}(\mathbb{R}) \rightarrow \mathbb{R}^{m}$ is the \emph{dual map} of $L$, defined by the fact that $\tr(L(\vec{x})Y^*) = \ip{\vec{x}}{L^*(Y)}$ for all $\vec{x} \in \mathbb{R}^{m}$ and $Y \in M_n(\mathbb{R})$.

\emph{Weak duality} for semidefinite programs says that $\ip{\vec{v}}{\vec{x}} \geq \tr(AY)$ whenever $\vec{x}$ and $Y$ satisfy the constraints of the semidefinite program. \emph{Strong duality} says that, under slightly stronger assumptions, we can find a particular $\vec{x}$ and $Y$ so that this inequality becomes an equality. The following theorem provides one possible set of assumptions that lead to strong duality (see \cite[Lecture~7]{Wat11}, for example):

\begin{theorem}[Slater conditions for strong duality]\label{thm:slater}
	If there exists $\vec{x} > \vec{0}$ such that $L(\vec{x}) \succ A$ and $Y \succ O$ such that $L^*(Y) < \vec{v}$, then strong duality holds for the semidefinite program~\eqref{sdp:form}. That is, there exists an feasible point $\vec{x_0}$ of the primal problem and a feasible point $Y_0$ of the dual problem such that $\ip{\vec{v}}{\vec{x_0}} = \tr(AY_0)$, and this quantity is the optimal value of both optimization problems.
\end{theorem}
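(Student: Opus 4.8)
The two halves of the statement have very different difficulty, so I would separate them. \emph{Weak duality} — the inequality $\ip{\vec{v}}{\vec{x}} \geq \tr(AY)$ for any feasible $\vec{x},Y$ — is a one-line computation:
\[
  \ip{\vec{v}}{\vec{x}} - \tr(AY) = \ip{\vec{v} - L^*(Y)}{\vec{x}} + \tr\big((L(\vec{x}) - A)Y\big),
\]
and each term on the right is the pairing of an element of a cone with an element of its dual cone ($\mathbb{R}^m_{\geq 0}$, which is self-dual, in the first term; the positive semidefinite cone, also self-dual, in the second), hence nonnegative. Consequently the primal optimal value $\alpha$ is at least the dual optimal value $\beta$, and the real content is that $\alpha = \beta$ with both attained. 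Note also that the two Slater points make $\alpha$ finite: the strictly feasible primal point is in particular primal feasible, so $\alpha < \infty$, and the strictly feasible dual point is a dual feasible point, so weak duality gives $\alpha > -\infty$.

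For strong duality I would run the standard separating-hyperplane argument. Consider the convex set
\[
  \mathcal{S} \defeq \big\{ \big(L(\vec{x}) - A - Z,\ \ip{\vec{v}}{\vec{x}} + r\big) : \vec{0} \leq \vec{x} \in \mathbb{R}^m,\ O \preceq Z \in M_n(\mathbb{R}),\ r \geq 0 \big\} \subseteq M_n(\mathbb{R}) \times \mathbb{R}.
\]
For any $\gamma < \alpha$ the point $(O,\gamma)$ lies outside $\mathcal{S}$ — otherwise the corresponding $\vec{x}$ would be primal feasible with objective at most $\gamma$. Separating $(O,\gamma)$ from $\mathcal{S}$ yields a nonzero pair $(Y,s)$ with $\tr(Y(L(\vec{x}) - A - Z)) + s(\ip{\vec{v}}{\vec{x}} + r) \geq s\gamma$ for all admissible $\vec{x},Z,r$. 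Letting $Z$ and $r$ run to infinity through their cones forces $Y \preceq O$ and $s \geq 0$; scaling $\vec{x}$ and then setting $\vec{x} = \vec{0}$ forces $L^*(Y) + \vec{v}$ to be a nonnegative vector and $-\tr(YA) \geq \gamma$. Provided $s > 0$, I would rescale to $s = 1$, set $\widetilde{Y} \defeq -Y \succeq O$, and read off that $\widetilde{Y}$ is dual feasible with $\tr(A\widetilde{Y}) \geq \gamma$; letting $\gamma \uparrow \alpha$ then gives $\beta \geq \alpha$, hence $\alpha = \beta$.

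The hard part — and the only place strict feasibility is actually needed for this half — is excluding the degenerate ``vertical'' case $s = 0$, in which the separating functional carries no information about the objective. If $s = 0$, the displayed inequality becomes $\tr(Y(L(\vec{x}) - A - Z)) \geq 0$ for all admissible $\vec{x}, Z$; feeding in the strictly feasible primal point $\vec{x}_0$ (so $L(\vec{x}_0) - A \succ O$) together with $Y \preceq O$ forces $\tr(Y(L(\vec{x}_0) - A)) = 0$ and hence $Y = O$, contradicting $(Y,s) \neq (O,0)$. The remaining work is the passage from ``$\beta \geq \gamma$ for all $\gamma < \alpha$'' to genuine \emph{attainment}: I would take the dual feasible points $\widetilde{Y}_\gamma$ produced above and note that $\tr((L(\vec{x}_0) - A)\widetilde{Y}_\gamma)$ stays bounded while $L(\vec{x}_0) - A \succ O$, which bounds $\tr(\widetilde{Y}_\gamma)$ and hence gives a convergent subsequence whose limit is a dual optimal solution. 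Attainment of the primal optimum follows by the symmetric argument, now using the strictly feasible dual point, since the primal--dual pair has the same inequality form. (One could instead invoke general conic-duality theorems, but the hyperplane argument is self-contained.)
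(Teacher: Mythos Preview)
The paper does not actually prove this theorem; it is quoted as a standard background result with a citation to lecture notes, so there is no ``paper's own proof'' to compare against. Your argument is the standard separating-hyperplane proof of conic strong duality and is essentially correct, including the use of the strict primal Slater point to kill the degenerate $s=0$ case and to extract a bounded (hence subsequentially convergent) family of near-optimal dual solutions for attainment.

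One small slip worth fixing: in the paragraph where you read off consequences of the separating functional, the conditions obtained from letting $\vec{x}$ vary should be $L^*(Y) + s\vec{v} \geq \vec{0}$ and $-\tr(YA) \geq s\gamma$, not $L^*(Y) + \vec{v} \geq \vec{0}$ and $-\tr(YA) \geq \gamma$; you have written the post-rescaling versions one sentence too early. Since you immediately afterward assume $s>0$ and normalize to $s=1$, the argument is unaffected, but as written the order is slightly off.
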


In words, Slater's theorem says that strong duality holds for any semidefinite program in which both the primal problem and the dual problem are \emph{strictly feasible}.

\subsection{Absolute Separability and Absolute PPT}

The final ingredient that we need to be able to prove our results is absolute separability. A state $\rho \in M_m(\mathbb{C}) \otimes M_n(\mathbb{C})$ is called \emph{absolutely separable} \cite{KZ00} if $U\rho U^*$ is separable for all unitary matrices $U \in M_m(\mathbb{C}) \otimes M_n(\mathbb{C})$. Similarly, a state is said to be \emph{absolutely PPT} \cite{Hil07} if $U\rho U^*$ has positive partial transpose (PPT) for all unitary matrices $U \in M_m(\mathbb{C}) \otimes M_n(\mathbb{C})$. Since both absolute separability and absolute PPT only depend on the spectrum of the state $\rho$, we define these sets via those spectra instead of via the states:
\begin{align*}
	\textup{ASEP}_{m,n} & \defeq \big\{ (\lambda_1,\ldots,\lambda_{mn}) : \text{$\exists$ abs. sep. $\rho \in M_m(\mathbb{C}) \otimes M_n(\mathbb{C})$ with eigenvalues $\lambda_1,\ldots,\lambda_{mn}$} \big\}, \\
	\textup{APPT}_{m,n} & \defeq \big\{ (\lambda_1,\ldots,\lambda_{mn}) : \text{$\exists$ abs. PPT $\rho \in M_m(\mathbb{C}) \otimes M_n(\mathbb{C})$ with eigenvalues $\lambda_1,\ldots,\lambda_{mn}$} \big\}.
\end{align*}

In the case when one of the local dimensions is $2$, the sets $\textup{ASEP}_{2,n}$ and $\textup{APPT}_{2,n}$ have a simple characterization \cite{Hil07,Joh13b,VAD01}:
\begin{align*}
	\textup{ASEP}_{2,n} = \textup{APPT}_{2,n} & = \big\{ (\lambda_1,\ldots,\lambda_{2n}) : \lambda_1 \leq \lambda_{2n-1} + 2\sqrt{\lambda_{2n-2}\lambda_{2n}} \big\} \\
	& = \left\{ (\lambda_1,\ldots,\lambda_{2n}) : \begin{bmatrix}
		2\lambda_{2n} & \lambda_{2n-1} - \lambda_1 \\ \lambda_{2n-1} - \lambda_1 & 2\lambda_{2n-2}
	\end{bmatrix} \succeq O \right\}.
\end{align*}
When $m,n \geq 3$, the set $\textup{APPT}_{m,n}$ has been completely characterized (again, see \cite{Hil07}), but the characterization is rather complicated, so we leave the details until we need them in Section~\ref{sec:high_dim_ews}. Not much is known about the set $\textup{ASEP}_{m,n}$ when $m,n \geq 3$ other than the obvious fact that $\textup{ASEP}_{m,n} \subseteq \textup{APPT}_{m,n}$. However, it is not even known whether or not this inclusion is strict \cite{AJR15}.

The reason for our interest in absolute separability and absolute PPT in this work is the following result, which establishes a duality result between these problems and the inverse eigenvalue problem for block positive matrices:

\begin{theorem}\label{thm:conv_hull_duality_appt1}
	The following duality relationships hold for all $m,n \geq 1$:
	\begin{align*}
		\sigma(\textup{BP}_{m,n})^\circ = \textup{ASEP}_{m,n} \qquad \text{and} \qquad \sigma(\textup{DBP}_{m,n})^\circ = \textup{APPT}_{m,n}.
	\end{align*}
\end{theorem}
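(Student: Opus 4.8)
The plan is to prove both equalities at once, since the $\textup{DBP}/\textup{APPT}$ case is word-for-word the $\textup{BP}/\textup{ASEP}$ case with ``separable'' replaced by ``PPT'' and ``block-positive'' replaced by ``decomposable.'' I would lean on two standard facts. The first is the \emph{hyperplane characterizations}: a positive semidefinite $\sigma$ is separable if and only if $\tr(W\sigma)\geq 0$ for every block-positive $W$ (forward direction noted already in Section~\ref{sec:math_prelims}, reverse direction being the separating-hyperplane theorem for the closed convex cone of unnormalized separable operators), and $\sigma$ is PPT if and only if $\tr(W\sigma)\geq 0$ for every decomposable $W=X^\Gamma+Y$ with $X,Y\succeq O$ (because $\tr(X^\Gamma\sigma)=\tr(X\sigma^\Gamma)$, so requiring $\tr((X^\Gamma+Y)\sigma)\geq 0$ for all $X,Y\succeq O$, testing $Y=O$ and $X=O$ separately, is equivalent to $\sigma^\Gamma\succeq O$ and $\sigma\succeq O$). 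The second is the \emph{unitary-conjugation trace identity}: for Hermitian $A,B$ of the same size with eigenvalue vectors $\vec\alpha,\vec\beta$, diagonalizing both reduces $\tr(AUBU^*)$ to $\sum_{i,j}\alpha_i\beta_j s_{ij}$ with $(s_{ij})$ doubly stochastic, so by Birkhoff's theorem and the rearrangement inequality $\min_U\tr(AUBU^*)=\langle\vec\alpha^\downarrow,\vec\beta^\uparrow\rangle$ (where $\vec\beta^\uparrow$ is sorted non-decreasingly), the minimum being attained at a permutation matrix, hence at a unitary.

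Next I would unwind the dual cone. Since $\sigma(\textup{BP}_{m,n})$ is invariant under permuting coordinates, $\vec\lambda\in\sigma(\textup{BP}_{m,n})^\circ$ if and only if $\langle\vec\lambda,\pi\vec\mu\rangle\geq 0$ for every $\vec\mu\in\sigma(\textup{BP}_{m,n})$ and every permutation $\pi$; taking the minimum over $\pi$ via the rearrangement inequality, this is equivalent to $\langle\vec\lambda^\downarrow,\vec\mu^\uparrow\rangle\geq 0$ for all $\vec\mu\in\sigma(\textup{BP}_{m,n})$. Applying this to $(1,0,\ldots,0)$, the spectrum of a rank-one product projection (which is positive semidefinite, hence block-positive), forces $\vec\lambda\geq\vec 0$. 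I will also use the elementary symmetry $\langle\vec\lambda^\downarrow,\vec\mu^\uparrow\rangle=\langle\vec\mu^\downarrow,\vec\lambda^\uparrow\rangle$.

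For $\textup{ASEP}_{m,n}\subseteq\sigma(\textup{BP}_{m,n})^\circ$: given absolutely separable $\rho$ with spectrum $\vec\lambda$ and block-positive $W$ with spectrum $\vec\mu$, we get $\langle\vec\lambda,\vec\mu\rangle\geq\langle\vec\lambda^\downarrow,\vec\mu^\uparrow\rangle=\min_U\tr(WU\rho U^*)\geq 0$, the last inequality because every $U\rho U^*$ is separable while $W$ is block-positive. For the reverse inclusion: given $\vec\lambda\in\sigma(\textup{BP}_{m,n})^\circ$ (so $\vec\lambda\geq\vec 0$), let $\rho$ be the diagonal state with eigenvalues proportional to $\vec\lambda$; for any unitary $U$ and any block-positive $W$ with spectrum $\vec\mu$, $\tr(WU\rho U^*)\geq\min_V\tr(WV\rho V^*)=\langle\vec\mu^\downarrow,\vec\lambda^\uparrow\rangle=\langle\vec\lambda^\downarrow,\vec\mu^\uparrow\rangle\geq 0$ by the unwound dual-cone condition, so $U\rho U^*$ is separable for every $U$; hence $\rho$ is absolutely separable and $\vec\lambda\in\textup{ASEP}_{m,n}$ (as cones). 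The $\textup{DBP}/\textup{APPT}$ case is verbatim the same, with the decomposable hyperplane characterization in place of the separable one, and using that $\sigma(\textup{DBP}_{m,n})$ is likewise permutation-invariant and contains the spectrum of every rank-one positive semidefinite matrix.

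I expect the only genuine friction to be bookkeeping: keeping the sorted orders straight (the pivot being $\langle\vec\lambda^\downarrow,\vec\mu^\uparrow\rangle=\langle\vec\mu^\downarrow,\vec\lambda^\uparrow\rangle$), and being careful that $\sigma(\textup{BP}_{m,n})^\circ$ and $\textup{ASEP}_{m,n}$ are compared as cones, so that normalizing $\rho$ in the reverse inclusion is harmless. Everything else is a matter of assembling the permutation-invariance of the spectrum sets, the rearrangement inequality, von Neumann's trace inequality, and the standard separating-hyperplane descriptions of the separable and PPT cones.
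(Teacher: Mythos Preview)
Your proposal is correct and follows essentially the same route as the paper. Both arguments hinge on (i) the hyperplane description of separability/PPT via block-positive/decomposable operators, (ii) the von Neumann--Birkhoff fact that $\min_U\tr(WU\rho U^*)=\langle\vec\mu^\downarrow,\vec\lambda^\uparrow\rangle$ (the paper cites \cite[Problem~III.6.14]{Bha97} for this), and (iii) the permutation-invariance of $\sigma(\textup{BP}_{m,n})$ to collapse ``for all permutations'' into the dual-cone inner-product condition. The only cosmetic differences are that you split the argument into two inclusions and verify $\vec\lambda\geq\vec 0$ along the way, whereas the paper runs a single chain of equivalences starting from the definition of $\textup{ASEP}_{m,n}$; and you spell out the $\textup{DBP}/\textup{APPT}$ case via $\tr(X^\Gamma\sigma)=\tr(X\sigma^\Gamma)$, whereas the paper simply remarks that it is analogous and was essentially done in \cite{Hil07}.
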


\begin{proof}
	This result for $\textup{APPT}_{m,n}$ was essentially shown (though not explicitly stated in this way) in \cite{Hil07}. With that in mind, we explicitly prove the characterization of $\textup{ASEP}_{m,n}$, and just note that the analogous result for $\textup{APPT}_{m,n}$ can be proved in a very similar manner.
	
	We start by noting that $\vec{\lambda} \in \textup{ASEP}_{m,n}$ if and only if
	\begin{align}\label{eq:asep_dual_cone_test}
		\tr(W(U\mathrm{diag}(\vec{\lambda})U^*)) \geq 0 \quad \text{for all} \quad W \in \textup{BP}_{m,n} \quad \text{and unitary} \quad U \in M_m(\mathbb{C}) \otimes M_n(\mathbb{C}).
	\end{align}
	Well, it is a well-known result (see \cite[Problem~III.6.14]{Bha97}, for example) that $\tr(W(U\mathrm{diag}(\vec{\lambda})U^*)) \geq 0$ for all unitary $U$ if and only if
	\begin{align*}
		\sum_{j=1}^{mn} \lambda_j \mu_{p(j)} \geq 0 \quad \text{for all permutations} \quad p : [mn] \rightarrow [mn],
	\end{align*}
	where $\mu_1, \mu_2, \ldots, \mu_{mn}$ are the eigenvalues of $W$. (In fact, it suffices to check that $\sum_{j=1}^{mn} \lambda_j \mu_{mn+1-j} \geq 0$, but this additional simplification is not relevant for our purposes.)
	
	Since $\sigma(\textup{BP}_{m,n})$ is invariant under permutations of its vector's entries, it follows that condition~\eqref{eq:asep_dual_cone_test} is equivalent to
	\begin{align*}
		\vec{\lambda} \cdot \vec{\mu} = \sum_{j=1}^{mn} \lambda_j \mu_{j} \geq 0 \quad \text{for all} \quad \vec{\mu} \in \sigma(\textup{BP}_{m,n}).
	\end{align*}
	In other words, we have shown that $\vec{\lambda} \in \textup{ASEP}_{m,n}$ if and only if $\vec{\lambda} \in \sigma(\textup{BP}_{m,n})^\circ$, so $\sigma(\textup{BP}_{m,n})^\circ = \textup{ASEP}_{m,n}$, as desired.
\end{proof}

By taking the dual cone of the sets in Theorem~\ref{thm:conv_hull_duality_appt1} and using the fact that the double-dual of a closed cone is its convex hull, we immediately obtain the following corollary:

\begin{corollary}\label{cor:conv_hull_duality_appt}
	The following duality relationships hold for all $m,n \geq 1$:
	\begin{align*}
		\textup{ASEP}_{m,n}^\circ = \mathrm{Conv}\big(\sigma(\textup{BP}_{m,n})\big) \qquad \text{and} \qquad \textup{APPT}_{m,n}^\circ = \mathrm{Conv}\big(\sigma(\textup{DBP}_{m,n})\big).
	\end{align*}
\end{corollary}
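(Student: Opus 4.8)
The plan is simply to dualize the two identities in Theorem~\ref{thm:conv_hull_duality_appt1} and invoke the bipolar theorem that was recalled in Section~\ref{sec:math_prelims}. For the first identity, I would start from $\sigma(\textup{BP}_{m,n})^\circ = \textup{ASEP}_{m,n}$ and take the dual cone of both sides, which gives $\sigma(\textup{BP}_{m,n})^{\circ\circ} = \textup{ASEP}_{m,n}^\circ$. Then I would recall that $\sigma(\textup{BP}_{m,n})$ is a closed cone (closedness is the only nonobvious ingredient, and was noted earlier via \cite{GCM14}), so the bipolar theorem $\mathcal{C}^{\circ\circ} = \mathrm{Conv}(\mathcal{C})$ applies with $\mathcal{C} = \sigma(\textup{BP}_{m,n})$, yielding $\sigma(\textup{BP}_{m,n})^{\circ\circ} = \mathrm{Conv}(\sigma(\textup{BP}_{m,n}))$. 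Combining the two displays gives $\textup{ASEP}_{m,n}^\circ = \mathrm{Conv}(\sigma(\textup{BP}_{m,n}))$, as claimed.

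For the decomposable case I would run the identical argument, starting from $\sigma(\textup{DBP}_{m,n})^\circ = \textup{APPT}_{m,n}$ and using that $\sigma(\textup{DBP}_{m,n})$ is likewise a closed cone, to obtain $\textup{APPT}_{m,n}^\circ = \mathrm{Conv}(\sigma(\textup{DBP}_{m,n}))$.

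There is essentially no obstacle here: the statement is a one-line consequence of Theorem~\ref{thm:conv_hull_duality_appt1}. The only point worth checking is that the hypotheses of the bipolar theorem are genuinely met, i.e.\ that $\sigma(\textup{BP}_{m,n})$ and $\sigma(\textup{DBP}_{m,n})$ are closed \emph{and} are cones (both facts were asserted at the start of Section~\ref{sec:math_prelims}), so that $(\cdot)^{\circ\circ}$ collapses to the convex hull rather than to some larger closure. No new computation is required.
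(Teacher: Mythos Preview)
Your proposal is correct and is essentially identical to the paper's own argument: the paper derives the corollary in one line by taking the dual cone of both sides of Theorem~\ref{thm:conv_hull_duality_appt1} and invoking the bipolar identity $\mathcal{C}^{\circ\circ} = \mathrm{Conv}(\mathcal{C})$ for closed cones. Your added remark that closedness (and being a cone) is the only hypothesis to verify is exactly right and matches what the paper had set up beforehand.
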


It is worth noting that these results show that the sets $\mathrm{Conv}\big(\sigma(\textup{BP}_{m,n})\big)$ and $\mathrm{Conv}\big(\sigma(\textup{DBP}_{m,n})\big)$ are the sets of spectra of what might be called ``absolute separability witnesses'' and ``absolute PPT witnesses'', respectively. These types of witnesses were considered in \cite{GCM14}.

\section{Two--Qubit Entanglement Witnesses}\label{sec:two_qubit}

With the preliminaries out of the way, we now consider the simplest non-trivial entanglement witnesses that exist, which are those that live in $M_2(\mathbb{C}) \otimes M_2(\mathbb{C})$ (a two-dimensional piece of quantum information is called a ``qubit'', so these entanglement witnesses are sometimes called ``two-qubit'' entanglement witnesses). Recall that every block-positive matrix $W \in M_2(\mathbb{C}) \otimes M_n(\mathbb{C})$ (when $n = 2$ or $n = 3$) is decomposable and thus can be written in the form $W = X^\Gamma + Y$, where $X,Y \in M_2(\mathbb{C}) \otimes M_n(\mathbb{C})$ are both positive semidefinite. Before proceeding, we first need the following slight strengthening of this fact in the $n=2$ case:

\begin{lemma}\label{lem:2x2_EW_decomp}
	A matrix $W \in M_2(\mathbb{C}) \otimes M_2(\mathbb{C})$ is block positive if and only if there exist positive semidefinite $X,Y \in M_2(\mathbb{C}) \otimes M_2(\mathbb{C})$ such that $W = X^\Gamma + Y$. Furthermore, $X$ can be chosen to have rank $1$.
\end{lemma}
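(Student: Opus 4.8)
The plan is to prove the nontrivial direction (block positivity of $W$ implies the existence of the decomposition with $X$ of rank $1$) and note that the converse is the easy, already-mentioned fact that decomposable matrices are block positive. The key input is that $\mathrm{BP}_{2,2} = \mathrm{DBP}_{2,2}$ by the Størmer--Woronowicz theorem, so we already know $W = X_0^\Gamma + Y_0$ for some $X_0, Y_0 \succeq O$; the work is to upgrade this to a decomposition in which the partially-transposed summand has rank $1$. Equivalently, since taking partial transpose is an involution, I want to show: for every block-positive $W \in M_2 \otimes M_2$, there is a rank-$1$ positive semidefinite matrix $\ketbra{v}{v}$ such that $W - (\ketbra{v}{v})^\Gamma \succeq O$.

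\emph{Main idea.} Recall (from the list of known results in Section~\ref{sec:known_results}) that a block-positive $W \in \mathrm{BP}_{2,2}$ that is not already positive semidefinite has exactly one negative eigenvalue; call the corresponding unit eigenvector $\ket{\psi}$, so $W\ket{\psi} = \lambda_{\min}\ket{\psi}$ with $\lambda_{\min} < 0$ (if $W \succeq O$ there is nothing to prove, take $\ket{v} = 0$). The natural candidate is to take $\ket{v}$ proportional to $\overline{\ket{\psi}}$ or to $\ket{\psi}$ itself, suitably scaled, because $(\ketbra{v}{v})^\Gamma$ should ``cancel'' the one direction of negativity of $W$. More precisely, I would write $\ket{\psi}$ in its Schmidt form $\ket{\psi} = \sqrt{p}\,\ket{00} + \sqrt{1-p}\,\ket{11}$ in suitable local bases (using a local unitary $U = U_A \otimes U_B$, under which both block positivity and the form of the claimed decomposition are preserved), so that WLOG $W\ket{\psi} = \lambda_{\min}\ket{\psi}$ with $\ket{\psi}$ of this standard form. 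Then by Lemma~\ref{lem:eigenvalues_rank1_pt}, for $\ket{v} = \sqrt{t}\big(\sqrt{p}\ket{00} + \sqrt{1-p}\ket{11}\big)$ the matrix $(\ketbra{v}{v})^\Gamma$ has, in the basis $\{\ket{00},\ket{11},\ket{01},\ket{10}\}$, a block-diagonal structure: a $2\times 2$ block $t\begin{bmatrix} p & \sqrt{p(1-p)} \\ \sqrt{p(1-p)} & 1-p \end{bmatrix}$ on $\mathrm{span}\{\ket{00},\ket{11}\}$ and a $2\times2$ block $t\begin{bmatrix} 0 & \sqrt{p(1-p)} \\ \sqrt{p(1-p)} & 0 \end{bmatrix}$ on $\mathrm{span}\{\ket{01},\ket{10}\}$. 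Crucially, $(\ketbra{v}{v})^\Gamma \ket{\psi} = t\ket{\psi}$ as well (since $\ket{\psi} = \ket{x_1}+\ket{x_2}$ in the notation of Lemma~\ref{lem:eigenvalues_rank1_pt}, and both $\ket{x_j}$ have eigenvalue $\alpha_j^2$, so we need equal Schmidt weights — here they match by construction). Choosing $t = -\lambda_{\min} > 0$, the matrix $Y := W - (\ketbra{v}{v})^\Gamma$ then annihilates $\ket{\psi}$: $Y\ket{\psi} = \lambda_{\min}\ket{\psi} - \lambda_{\min}\ket{\psi} = 0$.

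\emph{Finishing.} It remains to check $Y \succeq O$. Since $Y$ is Hermitian with $\ket{\psi} \in \ker Y$, it suffices to show $Y$ is positive semidefinite on $\ket{\psi}^\perp$. On that $3$-dimensional subspace, $W$ is positive \emph{definite} with smallest eigenvalue $\lambda_2(W) \geq 0$ (the second-smallest eigenvalue of $W$, which is $\ge 0$ because $W$ has only one negative eigenvalue), while $(\ketbra{v}{v})^\Gamma$ restricted there has eigenvalues among $\{0, \pm t\sqrt{p(1-p)}\}$ — in particular its largest eigenvalue on $\ket{\psi}^\perp$ is $t\sqrt{p(1-p)} = -\lambda_{\min}\sqrt{p(1-p)}$. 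So a crude bound gives $Y \succeq \big(\lambda_2(W) + \lambda_{\min}\sqrt{p(1-p)}\big)I$ on $\ket{\psi}^\perp$, which is \emph{not} obviously nonnegative, so this naive estimate is too lossy — here is where the real content lies. \textbf{I expect this positivity check to be the main obstacle}, and I think it cannot be done by eigenvalue bounds alone: one must genuinely use that $W$ is block positive, not merely that it has one negative eigenvalue. The fix I would pursue: block-positivity of $W$ says $(\bra{a}\otimes\bra{b})W(\ket{a}\otimes\ket{b}) \ge 0$ for all product vectors; combined with $W\ket{\psi} = \lambda_{\min}\ket{\psi}$ this forces strong constraints on the entries of $W$ in the Schmidt basis of $\ket{\psi}$ (for instance, testing on product vectors near $\ket{00}$ and $\ket{11}$ controls the diagonal, and testing on $\ket{0}\otimes(\cdot)$ and $(\cdot)\otimes\ket{0}$ controls the off-diagonal blocks). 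I would parametrize $W$ by these entries, subtract the explicit $(\ketbra{v}{v})^\Gamma$ above, and verify $Y\succeq O$ by checking nonnegativity of the relevant $2\times 2$ principal minors (using the already-known $\mathrm{ASEP}_{2,2}=\mathrm{APPT}_{2,2}$ $2\times2$-matrix criterion as a guide for what inequality to extract from block positivity). The rank claim is then immediate: $X := (\ketbra{v}{v})$ has rank $1$ (or $0$), and $W = X^\Gamma + Y$ with $X, Y \succeq O$.
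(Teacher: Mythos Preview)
Your approach has a concrete computational error that breaks the argument. You claim that with $\ket{v} = \sqrt{t}\big(\sqrt{p}\,\ket{00} + \sqrt{1-p}\,\ket{11}\big)$ one has $(\ketbra{v}{v})^\Gamma\ket{\psi} = t\ket{\psi}$. But by the very Lemma~\ref{lem:eigenvalues_rank1_pt} you invoke, $\ket{00}$ and $\ket{11}$ are eigenvectors of $(\ketbra{v}{v})^\Gamma$ with \emph{different} eigenvalues $tp$ and $t(1-p)$; hence
\[
(\ketbra{v}{v})^\Gamma\ket{\psi} \;=\; t\big(p^{3/2}\ket{00} + (1-p)^{3/2}\ket{11}\big),
\]
which is proportional to $\ket{\psi}$ only when $p=1/2$ (or $p\in\{0,1\}$). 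Your parenthetical ``both $\ket{x_j}$ have eigenvalue $\alpha_j^2$, so we need equal Schmidt weights --- here they match by construction'' is exactly backwards: the Schmidt weights of the negative eigenvector $\ket{\psi}$ are fixed by $W$ and are generically unequal, so $\ket{\psi}$ is \emph{not} an eigenvector of $(\ketbra{v}{v})^\Gamma$, and $Y\ket{\psi}\neq 0$. The rest of the plan (restrict to $\ket{\psi}^\perp$, etc.) is built on this identity, so it collapses. You also acknowledge that even the remaining positivity check ``is the main obstacle'' and leave it as a vague parametrize-and-check; that part is not a proof sketch so much as a hope.

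For comparison, the paper sidesteps all of this: rather than trying to guess $\ket{v}$ from the spectral data of $W$, it starts from any St{\o}rmer--Woronowicz decomposition $W = X_0^\Gamma + Y_0$ and then works entirely on $X_0$. Using the tetrahedron/octahedron geometry of two-qubit states (after reducing via local unitaries to the Bell-diagonal form), one writes $X_0 = t\ketbra{v}{v} + (1-t)\sigma$ with $\sigma$ PPT, so $(1-t)\sigma^\Gamma\succeq O$ can be absorbed into $Y_0$. No analysis of $W$'s eigenvectors or of block-positivity constraints is needed. If you want to salvage your direct approach, the first step is to abandon the idea that $\ket{v}$ should be (a multiple of) $\ket{\psi}$; the correct rank-$1$ piece is determined by $X_0$, not by the negative eigenvector of $W$.
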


\begin{proof}
	As we already noted, everything except for the ``furthermore'' remark is already known, so we just need to show that we can choose $X$ to have rank~$1$. To this end, suppose without loss of generality that $W$ is scaled so that $\tr(X) = 1$ (and thus $X$ is a mixed state). Then we use the fact from \cite[Section~IV]{LMO06} that we can write $X$ in the form
	\begin{align*}
	X = \frac{1}{4}(U \otimes V)\left(I + \sum_{k=1}^3 d_k\sigma_k \otimes \sigma_k \right)(U \otimes V)^*,
	\end{align*}
	where $U,V \in M_2(\mathbb{C})$ are unitary, $d_1,d_2,d_3$ are real numbers, and $\sigma_1,\sigma_2,\sigma_3$ are the Pauli matrices
	\begin{align*}
	\sigma_1 := \begin{bmatrix}
	0 & 1 \\ 1 & 0
	\end{bmatrix}, \quad \sigma_2 := \begin{bmatrix}
	0 & -i \\ i & 0
	\end{bmatrix}, \quad \text{and} \quad \sigma_3 := \begin{bmatrix}
	1 & 0 \\ 0 & -1
	\end{bmatrix}.
	\end{align*}
	Since conjugation by $U \otimes V$ has no effect on the properties we are investigating (positive semidefiniteness, being an entanglement witness, having positive partial transpose, and so on), from now on we assume without loss of generality that $X$ has the form
	\begin{align*}
	X = \frac{1}{4}\left(I + \sum_{k=1}^3 d_k\sigma_k \otimes \sigma_k\right).
	\end{align*}
	
	As noted in \cite{LMO06}, $X$ being a quantum state is equivalent to the statement that $(d_1,d_2,d_3)$ is in the convex hull of the four points $(1,1,-1)$, $(1,-1,1)$, $(-1,1,1)$, and $(-1,-1,-1)$. It is straightforward to check that the four matrices corresponding to these points have rank~$1$. For example, if $(d_1,d_2,d_3) = (1,-1,1)$ then direct computation shows that
	\begin{align*}
		\frac{1}{4}\big(I + \sigma_1 \otimes \sigma_1 - \sigma_2 \otimes \sigma_2 + \sigma_3 \otimes \sigma_3\big) = \ketbra{\psi_+}{\psi_+},
	\end{align*}
	where $\ket{\psi_+}$ is the pure state $\ket{\psi_+} := (\ket{0}\otimes\ket{0}+\ket{1}\otimes\ket{1})/\sqrt{2}$.
	
	Well, this convex hull is a tetrahedron in $\mathbb{R}^3$, and similarly the set of $(d_1,d_2,d_3)$ corresponding to (not necessarily positive semidefinite) matrices with positive partial transpose is another tetrahedron. Their intersection is an octahedron that corresponds to the set of PPT states. These tetrahedra and octahedron are depicted in Figure~\ref{fig:tetrahedra}.
	
	\begin{figure}[htb]
		\centering
		\includegraphics[width=6.5cm]{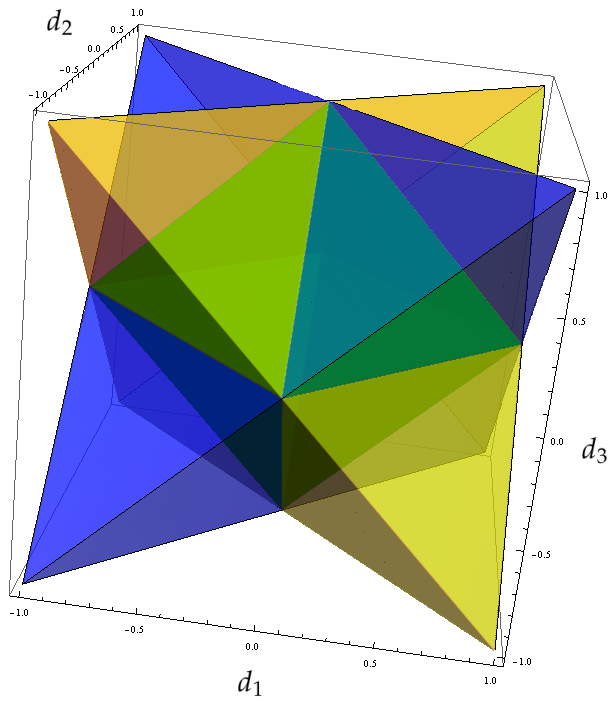}
		\caption{The region of tuples $(d_1,d_2,d_3) \in \mathbb{R}^3$ corresponding to the set of quantum states (blue tetrahedron), partial transposes of quantum states (yellow tetrahedron), and PPT states (octahedron where the two tetrahedra intersect). Every state in the blue tetrahedron can be written as a convex combination of the rank-$1$ state at the nearest vertex of the tetrahedron and some PPT state in the octahedron.}\label{fig:tetrahedra}
	\end{figure}
	
	The result now follows almost immediately from this geometric picture. Since $X$ is in the (blue) tetrahedron of mixed states, it can be written in the form $t\ketbra{v}{v} + (1-t)\sigma$, where $\ketbra{v}{v}$ is the rank-$1$ state at the closest of the four corners of the tetrahedron, and $\sigma$ is a PPT state ($\sigma$ can be chosen to be the closest point in the central octahedron). Thus \[ W = X^\Gamma + Y = (t\ketbra{v}{v} + (1-t)\sigma)^\Gamma + Y = (t\ketbra{v}{v})^\Gamma + ((1-t)\sigma^\Gamma + Y).\]Since $\sigma$ is PPT, $\sigma^\Gamma$ is positive semidefinite, so this is a decomposition of $W$ has the desired form.
\end{proof}

It is worth noting that Lemma~\ref{lem:2x2_EW_decomp} is \emph{not} true in $M_2(\mathbb{C}) \otimes M_3(\mathbb{C})$. As mentioned ealier, we can indeed always write a block positive matrix $W \in M_2(\mathbb{C}) \otimes M_3(\mathbb{C})$ in the form $W = X^\Gamma + Y$, where $X,Y \in M_2(\mathbb{C}) \otimes M_3(\mathbb{C})$ are positive semidefinite. However, the following example shows that we cannot in general choose $X$ to have rank~$1$.

\begin{example}\label{ex:23_2_neg}
	Consider the matrix
	\begin{align*}
		W = \left[
		\begin{array}{ccc|ccc}
		1 & 0 & 0 & 0 & 1 & 0 \\
		0 & 1 & 0 & 0 & 0 & 1 \\
		0 & 0 & 0 & 0 & 0 & 0 \\\hline
		0 & 0 & 0 & 0 & 0 & 0 \\
		1 & 0 & 0 & 0 & 1 & 0 \\
		0 & 1 & 0 & 0 & 0 & 1
		\end{array}
		\right]^\Gamma \in M_2(\mathbb{C}) \otimes M_3(\mathbb{C}).
	\end{align*}
	Since $W$ is the partial transpose of a positive semidefinite matrix, it is block positive. However, direct calculation shows that its eigenvalues are $1$ and $(1 \pm \sqrt{5})/2$, each with multiplicity $2$. Since two of these eigenvalues are negative, it cannot possibly be written in the form $W = X^\Gamma + Y$ with $X,Y$ positive semidefinite and $X$ having rank~$1$, since Lemma~\ref{lem:eigenvalues_rank1_pt} implies that any such $W$ has at most one negative eigenvalue.
\end{example}

We are now able to state the main result of this section, which provides a complete characterization of the eigenvalues of two-qubit block-positive matrices/entanglement witnesses.

\begin{theorem}\label{thm:2x2_eigs}
	There exists a block-positive matrix in $M_2(\mathbb{C}) \otimes M_2(\mathbb{C})$ with eigenvalues $\mu_1, \mu_2, \mu_3, \mu_{4} \in \mathbb{R}$ if and only if the following three inequalities hold:
	\begin{enumerate}[(a)]
		\item $\mu_{3}^\downarrow \geq 0$,
		\item $\mu_{4}^\downarrow \geq -\mu_2^\downarrow$, and
		\item $\mu_{4}^\downarrow \geq -\sqrt{\mu_1^\downarrow\mu_3^\downarrow}$.
	\end{enumerate}
\end{theorem}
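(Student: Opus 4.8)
The plan is to prove both directions by exploiting the duality of Theorem~\ref{thm:conv_hull_duality_appt1} together with the rank-one decomposition from Lemma~\ref{lem:2x2_EW_decomp}. For necessity, suppose $W \in \textup{BP}_{2,2}$ has eigenvalues $\mu_1 \geq \mu_2 \geq \mu_3 \geq \mu_4$ (writing $\mu_i$ for $\mu_i^\downarrow$ throughout). Condition (a), $\mu_3 \geq 0$, is just the known fact that a two-qubit block-positive matrix has at most one negative eigenvalue, recorded in Section~\ref{sec:known_results}. For (b) and (c), I would use that $\sigma(\textup{BP}_{2,2})$ lies in its own double-dual, so for every $\vec\lambda \in \textup{ASEP}_{2,2}$ we need $\sum_j \lambda_j \mu_{p(j)} \geq 0$ for all permutations $p$; in particular $\vec\mu \cdot \vec\lambda^{\uparrow} \geq 0$ (pairing largest $\mu$ with smallest $\lambda$). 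Using the explicit description $\textup{ASEP}_{2,2} = \{\vec\lambda : \lambda_1 \leq \lambda_3 + 2\sqrt{\lambda_2\lambda_4}\}$, the extreme rays of this cone are the ingredients: e.g. $\vec\lambda = (1,1,0,0)$ (sorted) is absolutely separable, and pairing it suitably against $\vec\mu$ gives $\mu_4 + \mu_3 \geq 0$ — but we want the stronger (b). The right move is to note that since $\mu_3 \geq 0 \geq \mu_4$, condition (b) $\mu_4 + \mu_2 \geq 0$ comes from testing against the absolutely separable spectrum $\vec\lambda = (1,1,0,0)$ paired so that $\lambda$'s two ones multiply $\mu_2$ and $\mu_4$; and (c) comes from testing against $\vec\lambda$ on the \emph{boundary} $\lambda_1 = 2\sqrt{\lambda_2\lambda_4}$, $\lambda_3 = 0$, say $\vec\lambda = (2,1,0,1)$ in some order, producing a constraint of the form $2\mu_{k} + \mu_{\ell} + \mu_{4} \geq 0$ that, optimized over the choice of absolutely separable $\vec\lambda$ and the pairing, tightens exactly to $\mu_4 \geq -\sqrt{\mu_1\mu_3}$. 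Concretely: the constraint $\vec\mu\cdot\vec\lambda \geq 0$ for all $\vec\lambda \in \textup{ASEP}_{2,2}$ and all pairings, after fixing the worst pairing (largest $\lambda$ with smallest available $\mu$), reduces to a small optimization whose solution yields (b) and (c).

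For sufficiency, given $(\mu_1,\mu_2,\mu_3,\mu_4)$ satisfying (a)–(c), I would construct an explicit block-positive $W = X^\Gamma + Y$ with $X$ rank one, as Lemma~\ref{lem:2x2_EW_decomp} guarantees is always possible. By Lemma~\ref{lem:eigenvalues_rank1_pt}, for $\ket v \in \mathbb{C}^2 \otimes \mathbb{C}^2$ with Schmidt coefficients $\alpha_1 \geq \alpha_2 \geq 0$, the matrix $(\ketbra vv)^\Gamma$ has eigenvalues $\alpha_1^2, \alpha_2^2, \alpha_1\alpha_2, -\alpha_1\alpha_2$. Taking $W = (\ketbra vv)^\Gamma + Y$ with $Y \succeq O$ diagonal in the eigenbasis of $(\ketbra vv)^\Gamma$, say $Y = \mathrm{diag}(y_1, y_2, y_3, y_4)$ with all $y_i \geq 0$, the eigenvalues of $W$ are $\alpha_1^2 + y_1,\ \alpha_2^2 + y_2,\ \alpha_1\alpha_2 + y_3,\ -\alpha_1\alpha_2 + y_4$. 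The smallest eigenvalue is $-\alpha_1\alpha_2 + y_4$ (when $y_4$ is small), and I want to realize the target spectrum: set the most negative target $\mu_4 = -\alpha_1\alpha_2 + y_4$ and match the three remaining targets $\mu_1,\mu_2,\mu_3$ to $\alpha_1^2 + y_1,\ \alpha_2^2 + y_2,\ \alpha_1\alpha_2 + y_3$ in some order. The task is then to choose $\alpha_1 \geq \alpha_2 \geq 0$ and $y_i \geq 0$ accomplishing this. A clean choice: put $\alpha_1\alpha_2 = -\mu_4$ (legitimate since $\mu_4 \leq 0$ by (a) when $\mu_4 \leq \mu_3 \geq 0$, or $\mu_4$ could be nonnegative, handled trivially by $X = O$), set $y_4 = 0$, and then we need $\alpha_1^2 \leq \mu_1$ (take $y_1 = \mu_1 - \alpha_1^2$), $\alpha_2^2 \leq \mu_2$ or $\mu_3$, and $\alpha_1\alpha_2 = -\mu_4 \leq \mu_2$ or $\mu_3$. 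The constraint $\alpha_1\alpha_2 = -\mu_4 \leq \mu_2$ is exactly (b) (pairing $\alpha_1\alpha_2 + y_3$ with the target $\mu_2$), and $\alpha_1^2\alpha_2^2 = \mu_4^2 \leq \mu_1\mu_3$ with $\alpha_1^2 \leq \mu_1$, $\alpha_2^2 \leq \mu_3$ is exactly (c): choose $\alpha_1^2 = \mu_1 \cdot (|\mu_4|/\sqrt{\mu_1\mu_3}) $ type balancing so that both $\alpha_1^2 \leq \mu_1$ and $\alpha_2^2 \leq \mu_3$ hold, using $|\mu_4| \leq \sqrt{\mu_1\mu_3}$. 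One needs to check the few degenerate cases ($\mu_3 = 0$, $\mu_4 \geq 0$, $\mu_1 = 0$) separately, but these are immediate.

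The main obstacle I anticipate is the necessity direction: identifying precisely which absolutely separable spectra $\vec\lambda$ and which pairings $p$ are the binding ones, so that the infinitely many inequalities $\sum_j \lambda_j \mu_{p(j)} \geq 0$ collapse to exactly the two conditions (b) and (c) — in particular showing that no \emph{further} independent inequality is produced. I expect this reduces to showing that the cone $\textup{ASEP}_{2,2}$ has essentially two relevant families of extreme rays, $(1,1,0,0)$ and the boundary curve $\{(\,2\sqrt{st},\, s,\, 0,\, t\,) : s,t \geq 0\}$ up to permutation, and that after the worst-case pairing these yield precisely (b) and (c); the sufficiency direction is then a matter of checking that the explicit construction above indeed meets every target, which is elementary. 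An alternative, perhaps cleaner route for necessity is to invoke $\mathrm{Conv}(\sigma(\textup{BP}_{2,2})) = \textup{ASEP}_{2,2}^\circ$ from Corollary~\ref{cor:conv_hull_duality_appt}, compute that dual cone explicitly from the $2 \times 2$ positive-semidefiniteness description of $\textup{ASEP}_{2,2}$, and then recover $\sigma(\textup{BP}_{2,2})$ itself by layering back in the non-convex constraint (a) "at most one negative eigenvalue" — since conditions (b) and (c), being "for all pairings" conditions, are automatically permutation-invariant and describe the whole convex hull, while (a) is the sole source of non-convexity.
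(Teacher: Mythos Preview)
Your sufficiency direction is essentially the paper's: build $W=(\ketbra{v}{v})^\Gamma+Y$ with $Y\succeq O$ diagonal in the eigenbasis of $(\ketbra{v}{v})^\Gamma$, choosing the Schmidt coefficients so that $\alpha_1\alpha_2=-\mu_4^\downarrow$ and then padding with $Y$. The paper makes the concrete choice $\alpha_2=\sqrt{\mu_3^\downarrow}$, $\alpha_1=-\mu_4^\downarrow/\alpha_2$, but your sketch would lead to the same construction.

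Your necessity direction, however, has a genuine gap. Both routes you propose go through duality with $\textup{ASEP}_{2,2}$, and any inequality you extract that way---even after optimizing over $\vec\lambda$ and over the pairing permutation---is a condition for membership in $\mathrm{Conv}(\sigma(\textup{BP}_{2,2}))=\textup{ASEP}_{2,2}^\circ$, not in $\sigma(\textup{BP}_{2,2})$ itself. Your final claim, that conditions~(b) and~(c) ``describe the whole convex hull'' while~(a) is the sole source of non-convexity, is false: condition~(c) is \emph{strictly stronger} than convex-hull membership together with~(a). Concretely, $(9,3,1,-2)\in\sigma(\textup{BP}_{2,2})$ (it satisfies (a)--(c)), and hence
\[
(9,3,0,-1)=\tfrac{1}{3}\big[(9,3,1,-2)+(9,3,1,-2)+(9,3,-2,1)\big]\in\mathrm{Conv}\big(\sigma(\textup{BP}_{2,2})\big),
\]
yet $(9,3,0,-1)$ satisfies~(a) ($\mu_3^\downarrow=0$) while failing~(c) ($\mu_4^\downarrow=-1<0=-\sqrt{\mu_1^\downarrow\mu_3^\downarrow}$). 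So no amount of testing against absolutely separable $\vec\lambda$ will ever produce~(c); your proposed test vectors $(1,1,0,0)$ and $(2\sqrt{st},s,0,t)$ are in fact not even in $\textup{ASEP}_{2,2}$.

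The paper's necessity argument avoids duality entirely. It uses Lemma~\ref{lem:2x2_EW_decomp} to write $W=t(\ketbra{v}{v})^\Gamma+Y$ with $Y\succeq O$, verifies (b) and (c) directly for $(\ketbra{v}{v})^\Gamma$ via Lemma~\ref{lem:eigenvalues_rank1_pt}, and then observes that adding $Y\succeq O$ only increases each ordered eigenvalue (Weyl), so the inequalities $\mu_2^\downarrow+\mu_4^\downarrow\geq 0$ and $\mu_4^\downarrow+\sqrt{\mu_1^\downarrow\mu_3^\downarrow}\geq 0$ persist. This monotonicity under adding a positive semidefinite perturbation is the missing idea in your proposal.
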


\begin{proof}
	We start by proving that the three listed eigenvalue inequalities are necessary.
	
	As we mentioned in Section~\ref{sec:known_results}, inequality~(a) is already known. To see that inequalities~(b) and~(c) are necessary, we start by noting that they both hold for block-positive matrices of the form $W = (\ketbra{v}{v})^\Gamma$, since Lemma~\ref{lem:eigenvalues_rank1_pt} says that the if $\alpha_1,\alpha_2$ are the Schmidt coefficients of $\ket{v}$ then the eigenvalues of $W$ are $\mu_1^\downarrow = \alpha_1^2$, $\mu_2^\downarrow = \alpha_1\alpha_2$, $\mu_3^\downarrow = \alpha_2^2$, and $\mu_4^\downarrow = -\alpha_1\alpha_2$, and it is straightforward to check that these quantities always satisfy inequalities~(b) and~(c).
	
	Well, to see that inequalities~(b) and~(c) are necessary for \emph{all} block-positive matrices, we recall from Lemma~\ref{lem:2x2_EW_decomp} that we can write every block-positive matrix $W \in M_2(\mathbb{C}) \otimes M_2(\mathbb{C})$ in the form
	\[
		W = t(\ketbra{v}{v})^\Gamma + Y,
	\]
	where $t \geq 0$ and $Y$ is positive semidefinite. Well, we just showed that the inequalities $\mu_2^\downarrow + \mu_{4}^\downarrow \geq 0$ and $\mu_{4}^\downarrow + \sqrt{\mu_1^\downarrow\mu_3^\downarrow} \geq 0$ are satisfied by the eigenvalues of $(\ketbra{v}{v})^\Gamma$. Since adding a positive semidefinite matrix $Y$ can only increase the eigenvalues, the eigenvalues of $W$ must also satisfy these same inequalities (i.e., (b) and (c)). This completes the proof of necessity.
	
	To see that the inequalities are sufficient, we demonstrate a procedure for explicitly constructing a block-positive matrix with any set of eigenvalues satisfying the inequalities (a), (b), and (c). If $\mu_4^\downarrow \geq 0$ then all eigenvalues are non-negative so we can find a positive semidefinite $W$ that works, and if $\mu_3^\downarrow = 0$ then inequality~(c) implies $\mu_4^\downarrow = 0$, so we can again find a positive semidefinite $W$. Thus we assume from now on that $\mu_3^\downarrow > 0 > \mu_4^\downarrow$. Define $\alpha_2 := \sqrt{\mu_3^\downarrow}$ and $\alpha_1 := -\mu_4^\downarrow / \alpha_2$, as well as
	\begin{align*}
		\gamma_1 := \mu_1^\downarrow - (\mu_4^\downarrow)^2/\mu_3^\downarrow \quad \text{and} \quad \gamma_2 = \mu_2^\downarrow + \mu_4^\downarrow,
	\end{align*}
	which are all non-negative by inequalities~(b) and~(c). Then it is straightforward to check that the matrix
	\begin{align*}
		W = \begin{bmatrix}\alpha_1^2 & 0 & 0 & \alpha_1\alpha_2 \\ 0 & 0 & 0 & 0 \\ 0 & 0 & 0 & 0 \\ \alpha_1\alpha_2 & 0 & 0 & \alpha_2^2\end{bmatrix}^\Gamma + \gamma_1\begin{bmatrix}1 & 0 & 0 & 0 \\ 0 & 0 & 0 & 0 \\ 0 & 0 & 0 & 0 \\ 0 & 0 & 0 & 0\end{bmatrix} + \frac{\gamma_2}{2}\begin{bmatrix}0 & 0 & 0 & 0 \\ 0 & 1 & 1 & 0 \\ 0 & 1 & 1 & 0 \\ 0 & 0 & 0 & 0\end{bmatrix}
	\end{align*}
	is block-positive and has eigenvalues $\mu_1^\downarrow \geq \mu_2^\downarrow \geq \mu_3^\downarrow \geq \mu_4^\downarrow$ (the corresponding eigenvectors are $\ket{0}\otimes\ket{0}$, $\ket{0}\otimes\ket{1}+\ket{1}\otimes\ket{0}$, $\ket{1}\otimes\ket{1}$, and $\ket{0}\otimes\ket{1} - \ket{1}\otimes\ket{0}$, respectively).
\end{proof}

In order to visualize the set of possible spectra of two-qubit block-positive matrices $W$, note that (since $\tr(W) \geq 0$ and $\tr(W) = 0$ if and only if $W = O$), we can always scale $W$ to have $\tr(W) = 1$. In other words, we can choose $\mu_4 = 1 - \mu_1 - \mu_2 -\mu_3$ and then plot the tuples $(\mu_1,\mu_2,\mu_3) \in \mathbb{R}^3$, where $\mu_1,\mu_2,\mu_3 \geq 0$ are the three non-negative eigenvalues of $W$. This region is displayed in Figure~\ref{fig:two_qubit_ew}---the fact that it is not convex should not be surprising, given Example~\ref{exam:eigs_not_convex}.

\begin{figure}[htb]
\centering
\includegraphics[width=7cm]{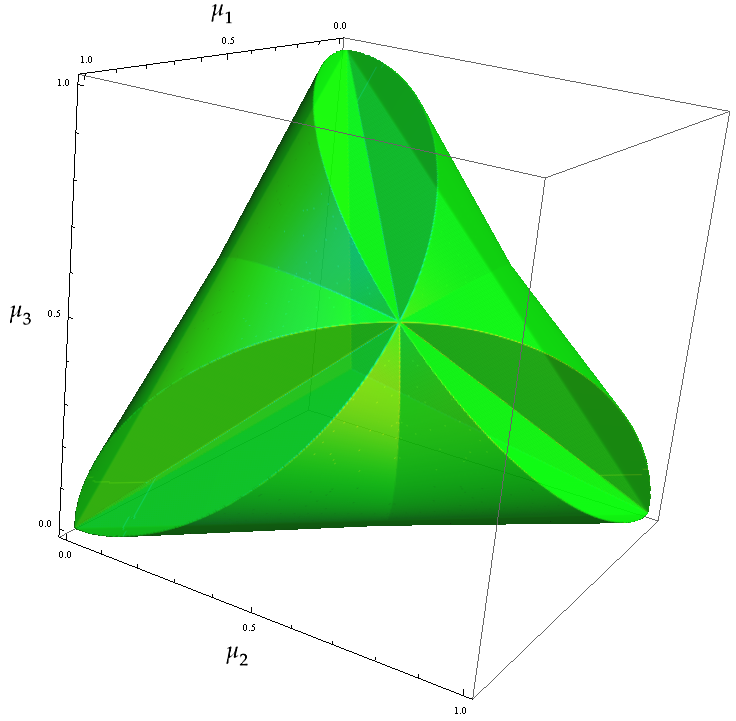}
\caption{The region of tuples $(\mu_1,\mu_2,\mu_3) \in \mathbb{R}^3$ with the property that $(\mu_1,\mu_2,\mu_3,\mu_4)$ is the spectrum of a block-positive matrix, where $\mu_1,\mu_2,\mu_3 \geq 0$ and $\mu_4 = 1-\mu_1-\mu_2-\mu_3$ (i.e., the block-positive matrix is normalized to have trace $1$).}\label{fig:two_qubit_ew}
\end{figure}

\section{Qubit--Qudit Entanglement Witnesses}\label{sec:qubit_qudit}

We now consider entanglement witnesses in the case where just one of the local dimensions is $2$. In this case, we are only able to get a new family of necessary conditions that the spectra must satisfy---not necessary and sufficient conditions like in the previous section. The main result of this section provides a complete characterization of $\mathrm{Conv}\big(\sigma(\textup{BP}_{2,n})\big)$ and $\mathrm{Conv}\big(\sigma(\textup{DBP}_{2,n})\big)$, which in turn imply necessary conditions on $\sigma(\textup{BP}_{2,n})$:

\begin{theorem}\label{thm:mu2n_characterize}
	Suppose $\vec{\mu} \in \mathbb{R}^{2n}$. Define
	\[
		s_k := \sum_{j=k}^{2n} \mu_j^\downarrow \quad \text{for} \quad k = 1, 2, 3 \qquad \text{and} \qquad s_{-} := \frac{1}{2}\left(\sum_{j=1}^{2n} \mu_j - \sum_{j=1}^{2n} |\mu_j|\right) = \sum_{j \ : \ \mu_j < 0} \mu_j.
	\]
	Then the following are equivalent:
	\begin{enumerate}
		\item[a)] $\vec{\mu} \in \mathrm{Conv}\big(\sigma(\textup{BP}_{2,n})\big)$.
		
		\item[b)] $\vec{\mu} \in \mathrm{Conv}\big(\sigma(\textup{DBP}_{2,n})\big)$.
		
		\item[c)] There exists a positive semidefinite matrix $X \in M_2(\mathbb{R})$ such that
		\begin{align*}
		x_{1,1}+x_{2,2} & \leq s_1, & & & x_{2,2} & \leq s_2, \\
		x_{1,2} + x_{2,2} & \leq s_3, & \text{and} & & x_{1,2} & \leq s_{-}.
		\end{align*}
		
		\item[d)] If we define $q_1 := s_1^2 - 4s_{-}^2$ and $q_2 := (s_1+2s_3)^2 - 8s_3^2$ then the following inequalities all hold:
		\begin{align*}
			q_1,q_2 & \geq 0 \\
			\sqrt{q_1} & \geq s_1 - 2s_2 \\
			\sqrt{q_2} & \geq s_1 - 4s_2 + 2s_3 \\
			2\sqrt{q_1} + \sqrt{q_2} & \geq s_1 - 2s_3.
		\end{align*}
	\end{enumerate}
\end{theorem}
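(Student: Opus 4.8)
The plan is to deduce (a)$\Leftrightarrow$(b) from the duality already available, to prove (b)$\Leftrightarrow$(c) by dualizing a semidefinite program, and to obtain (c)$\Leftrightarrow$(d) by eliminating the auxiliary matrix $X$. The equivalence (a)$\Leftrightarrow$(b) is immediate: since $\textup{ASEP}_{2,n}=\textup{APPT}_{2,n}$ whenever one local dimension equals $2$, Corollary~\ref{cor:conv_hull_duality_appt} gives $\mathrm{Conv}\big(\sigma(\textup{BP}_{2,n})\big)=\textup{ASEP}_{2,n}^\circ=\textup{APPT}_{2,n}^\circ=\mathrm{Conv}\big(\sigma(\textup{DBP}_{2,n})\big)$.

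For (b)$\Leftrightarrow$(c): by Theorem~\ref{thm:conv_hull_duality_appt1}, statement (b) says $\langle\vec\mu,\vec\lambda\rangle\ge 0$ for every $\vec\lambda\in\textup{APPT}_{2,n}$. Since $\textup{APPT}_{2,n}$ and its dual cone are permutation invariant we may assume $\vec\mu=\vec\mu^\downarrow$, and then the same observation used in the proof of Theorem~\ref{thm:conv_hull_duality_appt1} (namely \cite[Problem~III.6.14]{Bha97}) shows (b) is equivalent to $\sum_{j=1}^{2n}\mu^\downarrow_j\lambda^\downarrow_{2n+1-j}\ge 0$ for every sorted $\vec\lambda^\downarrow\in\textup{APPT}_{2,n}$. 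The set of sorted vectors in $\textup{APPT}_{2,n}$ is the closed convex cone $C=\{\vec\lambda:\lambda_1\ge\cdots\ge\lambda_{2n}\ge 0,\ M(\vec\lambda)\succeq O\}$, with $M(\vec\lambda)=\big[\begin{smallmatrix}2\lambda_{2n} & \lambda_{2n-1}-\lambda_1 \\ \lambda_{2n-1}-\lambda_1 & 2\lambda_{2n-2}\end{smallmatrix}\big]$, so (b) asserts that the semidefinite program ``minimize $\langle\vec\nu,\vec\lambda\rangle$ over $\vec\lambda\in C$'' has optimal value $\ge 0$, where $\vec\nu$ is $\vec\mu$ rearranged in increasing order. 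I will dualize this homogeneous SDP. Its dual variables are a positive semidefinite $X\in M_2(\mathbb{R})$, dual to $M(\vec\lambda)\succeq O$, and non-negative scalars dual to the chain inequalities $\lambda_1\ge\lambda_2,\dots,\lambda_{2n-1}\ge\lambda_{2n},\lambda_{2n}\ge 0$. Computing the adjoint $M^*$ and telescoping away the scalars (they are forced to equal the partial sums $\sum_{k\le j}(\nu_k-(M^*X)_k)$, $j=1,\dots,2n$, so only their non-negativity survives) turns dual feasibility, after a change of variables on $X$ preserving positive semidefiniteness, into: there is $X\succeq O$ in $M_2(\mathbb{R})$ with $x_{11}+x_{22}\le s_1$, $x_{22}\le s_2$, $x_{12}+x_{22}\le s_3$ (from the indices $j=2n,2n-1,2n-2$) and $x_{12}\le\sum_{k\le j}\nu_k$ for every $j\le 2n-3$. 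Strong duality, needed so that ``primal value $\ge 0$'' and ``dual feasible'' are truly equivalent, holds by a Slater-type condition (cf.\ Theorem~\ref{thm:slater}): $C$ has non-empty interior, witnessed by any small generic perturbation of $(1,1,\dots,1)$, which makes every chain inequality strict and $M(\vec\lambda)$ positive definite.

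It remains to simplify the last family of constraints and to pass from (c) to (d). Because $\vec\nu$ is increasing, $\sum_{k\le j}\nu_k$ is minimized over $0\le j\le 2n$ at $j$ equal to the number of negative entries of $\vec\mu$, with minimal value $s_-$; hence $\{x_{12}\le\sum_{k\le j}\nu_k : j\le 2n-3\}$ collapses to the single inequality $x_{12}\le s_-$ whenever that minimizing index is at most $2n-3$, and in the remaining configurations ($\vec\mu$ has at most one non-negative entry) a one-line check closes the gap: either the minimizing index is $2n-2$, so that $x_{12}+x_{22}\le s_3$ together with $x_{22}\ge 0$ already forces $x_{12}\le s_3=s_-$, or one of $s_1,s_2$ is strictly negative, so that both the dual system and the system in (c) are infeasible since $X\succeq O$ forces $x_{11},x_{22}\ge 0$. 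This proves (b)$\Leftrightarrow$(c). For (c)$\Leftrightarrow$(d), observe that $X=\big[\begin{smallmatrix}x_{11}&x_{12}\\x_{12}&x_{22}\end{smallmatrix}\big]\succeq O$ iff $x_{11},x_{22}\ge 0$ and $x_{12}^2\le x_{11}x_{22}$, and since $x_{11}$ enters only $x_{11}+x_{22}\le s_1$, a suitable $x_{11}$ exists iff $x_{12}^2\le x_{22}(s_1-x_{22})$. So (c) amounts to the existence of $(x_{22},x_{12})$ with $0\le x_{22}\le s_2$, $x_{12}\le s_-$, $x_{12}\le s_3-x_{22}$, and $x_{12}^2\le x_{22}(s_1-x_{22})$. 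For fixed $x_{22}$ the admissible $x_{12}$ fill $[-\sqrt{x_{22}(s_1-x_{22})},\ \min\{s_-,s_3-x_{22}\}]$, which is non-empty iff $x_{22}^2-s_1x_{22}+s_-^2\le 0$ and either $x_{22}\le s_3$ or $2x_{22}^2-(s_1+2s_3)x_{22}+s_3^2\le 0$. The solution sets of these quadratics in $x_{22}$ are intervals with endpoints built from the discriminants $q_1=s_1^2-4s_-^2$ and $q_2=(s_1+2s_3)^2-8s_3^2$, and asking whether some $x_{22}\in[0,s_2]$ lies in the required intersection is a short endpoint comparison that returns exactly the four inequalities in (d).

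I expect the main obstacle to be the dual-SDP bookkeeping in (b)$\Leftrightarrow$(c): getting the adjoint $M^*$ and the telescoped multipliers right, and (the more delicate point) justifying that the family $\{x_{12}\le\sum_{k\le j}\nu_k : j\le 2n-3\}$ genuinely reduces to $x_{12}\le s_-$, which forces one to isolate the degenerate spectra with many negative entries. By comparison (c)$\Leftrightarrow$(d) is routine once the feasibility is reduced to the single variable $x_{22}$, the only care needed being the ordering of the interval endpoints.
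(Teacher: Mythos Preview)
Your plan is correct and follows essentially the same route as the paper: (a)$\Leftrightarrow$(b) via $\textup{ASEP}_{2,n}=\textup{APPT}_{2,n}$ and Corollary~\ref{cor:conv_hull_duality_appt}, (b)$\Leftrightarrow$(c) by dualizing the SDP that tests membership in $\textup{APPT}_{2,n}^\circ$ and telescoping the chain multipliers, and (c)$\Leftrightarrow$(d) by quantifier elimination on the entries of $X$ down to a single variable and endpoint comparison of the resulting quadratic intervals. One small technical point to tighten: because your primal is homogeneous, invoking Slater directly is awkward---the paper handles this by adjoining the normalization $\sum_j\lambda_j=1$ before dualizing and then checking that the dual has a feasible point with $d=0$, which you should do as well.
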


\begin{proof}
	The equivalence of (a) and (b) follows immediately from Corollary~\ref{cor:conv_hull_duality_appt} and the fact that $\textup{ASEP}_{2,n} = \textup{APPT}_{2,n}$ \cite{Joh13b}.
	
	To see that (b) and (c) are equivalent, we use Corollary~\ref{cor:conv_hull_duality_appt} to note that $\vec{\mu} \in \mathrm{Conv}\big(\sigma(\textup{DBP}_{2,n})\big)$ if and only if $\vec{\mu} \in \textup{APPT}_{2,n}^\circ$, if and only if the optimal value of the following (primal) semidefinite program is non-negative:
	\begin{align}\begin{split}\label{sdp:2n_primal}
	\text{minimize:} & \ \sum_{j=1}^{2n} \mu_j^\downarrow \lambda_{2n+1-j} \\
	\text{subject to:} & \ \begin{bmatrix}
	2\lambda_{2n} & \lambda_{2n-1} - \lambda_1 \\
	\lambda_{2n-1} - \lambda_1 & 2\lambda_{2n-2}
	\end{bmatrix} \succeq O \\
	& \ \sum_{j=1}^{2n} \lambda_j = 1 \\
	& \ \lambda_1 \geq \lambda_2 \geq \cdots \geq \lambda_{2n} \geq 0.
	\end{split}\end{align}
	
	(Note that the constraint $\sum_{j=1}^{2n} \lambda_j = 1$ does not actually affect whether or not the optimal value is non-negative, but it is easier to demonstrate that strong duality holds with it present). Well, it will be convenient to instead work with the dual of the above SDP, which (after some routine calculation and simplification) has the following form:
	\begin{align}\begin{split}\label{sdp:2n_dual}
	\text{maximize:} & \ d \\
	\text{subject to:} & \ -2b + z_{2n-1} + d \leq \mu_{2n}^\downarrow \\
	& \ - z_k + z_{k-1} + d \leq \mu_k^\downarrow \quad \text{for} \quad k = 4,5,\ldots,2n-1 \\
	& \ 2c-z_3 + z_2 + d \leq \mu_3^\downarrow \\
	& \ 2b-z_2 + z_1 + d \leq \mu_2^\downarrow \\
	& \ 2a - z_1 + d \leq \mu_1^\downarrow \\
	& d,z_1,\ldots,z_{2n-1} \geq 0 \\
	& \ \begin{bmatrix}a & b \\ b & c\end{bmatrix} \succeq O.
	\end{split}\end{align}
	
	To see that strong duality holds for this pair of SDPs, note that in the primal problem we can choose a large scalar $c$ and a small scalar $\varepsilon > 0$, then we can construct a strictly feasible point of the primal problem~\eqref{sdp:2n_primal} by letting $\lambda_j = (c - j\varepsilon)/(2nc)$ for all $j$. Similarly, to construct a strictly feasible point of the dual problem~\eqref{sdp:2n_dual}, we choose $b = 0$, $a=c=z_i = \varepsilon$ for all $i$, and $d$ very large and negative. Thus both the primal and dual problems are strictly feasible, so Slater's theorem (Theorem~\ref{thm:slater}) tells us that strong duality holds, and in particular these SDPs have the same optimal value.
	
	Thus the primal SDP~\eqref{sdp:2n_primal} has non-negative optimal value (i.e., $\vec{\mu} \in \mathrm{Conv}\big(\sigma(\textup{DBP}_{2,n})\big)$) if and only if there is a feasible point of the dual SDP~\eqref{sdp:2n_dual} with $d = 0$. To simplify this dual existence problem into the form described by condition~(c) of the theorem, we note that we can eliminate the variables $z_1, \ldots, z_{2n-1}$ in a straightforward way since all of the constraints involving them are linear. For example, to eliminate $z_{2n-1}$ we note that the only constraints that involve $z_{2n-1}$ are
	\begin{align*}
		\max\big\{0, z_{2n-2} - \mu_{2n-1}^\downarrow\big\} \leq z_{2n-1} \leq \mu_{2n}^\downarrow + 2b.
	\end{align*}
	We can thus ``squeeze out'' $z_{2n-1}$ by noting that it exists if and only if $\max\big\{0, z_{2n-2} - \mu_{2n-1}^\downarrow\big\} \leq \mu_{2n}^\downarrow + 2b$ (i.e., if and only if $0 \leq \mu_{2n}^\downarrow + 2b$ and $z_{2n-2} - \mu_{2n-1}^\downarrow \leq \mu_{2n}^\downarrow + 2b$). After simplifying a bit, we see that we are now asking whether or not there exist $a,b,c,z_1,\ldots,z_{2n-2}$ such that the following constraints are satisfied:
	\begin{align}\begin{split}\label{sdp:2n_exist_constraints}
		-2b & \leq \mu_{2n}^\downarrow \\
		-2b + z_{2n-2} & \leq \mu_{2n-1}^\downarrow + \mu_{2n}^\downarrow \\
		-z_k + z_{k-1} & \leq \mu_k^\downarrow \quad \text{for} \quad k = 4,5,\ldots,2n-2 \\
		2c-z_3 + z_2 & \leq \mu_3^\downarrow \\
		2b-z_2 + z_1 & \leq \mu_2^\downarrow \\
		2a - z_1 & \leq \mu_1^\downarrow \\
		z_1,\ldots,z_{2n-2} & \geq 0 \\
		\begin{bmatrix}a & b \\ b & c\end{bmatrix} & \succeq O.
	\end{split}\end{align}
	
	By repeating this same argument for $k = 2n-2, 2n-3, \ldots, 5, 4$, we similarly see that the pair of constraints
	\begin{align*}
		-2b + z_k \leq \sum_{j=k+1}^{2n}\mu_{j}^\downarrow \quad \text{and} \quad -z_k + z_{k-1} \leq \mu_k^\downarrow
	\end{align*}
	are equivalent to the pair of constraints (no longer depending on $z_k$)
	\begin{align*}
		-2b + z_{k-1} \leq \sum_{j=k}^{2n}\mu_{j}^\downarrow \quad \text{and} \quad -2b \leq \sum_{j=k+1}^{2n}\mu_{j}^\downarrow.
	\end{align*}
	Thus we see that the system of inequalities~\eqref{sdp:2n_exist_constraints} is equivalent to the following system that only asks for the existence of $a,b,c,z_1,z_2,z_3$ such that
	\begin{align*}
		-2b & \leq \sum_{j=k}^{2n}\mu_{j}^\downarrow \quad \text{for} \quad k = 5,\ldots,2n \\
		-2b + z_3 & \leq \sum_{j=4}^{2n}\mu_{j}^\downarrow \\
		2c-z_3 + z_2 & \leq \mu_3^\downarrow \\
		2b-z_2 + z_1 & \leq \mu_2^\downarrow \\
		2a - z_1 & \leq \mu_1^\downarrow \\
		z_1,z_2,z_3 & \geq 0 \\
		\begin{bmatrix}a & b \\ b & c\end{bmatrix} & \succeq O.
	\end{align*}
	Similarly eliminating $z_3$ via
	\[
		\max\big\{0, 2c + z_2 - \mu_3^\downarrow\big\} \leq z_3 \leq 2b + \sum_{j=4}^{2n}\mu_{j}^\downarrow
	\]
	gives
	\begin{align*}
		-2b & \leq \sum_{j=k}^{2n}\mu_{j}^\downarrow \quad \text{for} \quad k = 4,\ldots,2n \\
		2c-2b + z_2 & \leq \sum_{j=3}^{2n}\mu_{j}^\downarrow \\
		2b-z_2 + z_1 & \leq \mu_2^\downarrow \\
		2a - z_1 & \leq \mu_1^\downarrow \\
		z_1,z_2 & \geq 0 \\
		\begin{bmatrix}a & b \\ b & c\end{bmatrix} & \succeq O.
	\end{align*}
	Then eliminating $z_2$ via
	\[
		\max\big\{0, 2b + z_1 - \mu_2^\downarrow\big\} \leq z_2 \leq 2b - 2c + \sum_{j=3}^{2n}\mu_{j}^\downarrow
	\]
	gives
	\begin{align*}
		-2b & \leq \sum_{j=k}^{2n}\mu_{j}^\downarrow \quad \text{for} \quad k = 4,\ldots,2n \\
		2c-2b & \leq \sum_{j=3}^{2n}\mu_{j}^\downarrow \\
		2c + z_1 & \leq \sum_{j=2}^{2n}\mu_{j}^\downarrow \\
		2a - z_1 & \leq \mu_1^\downarrow \\
		z_1 & \geq 0 \\
		\begin{bmatrix}a & b \\ b & c\end{bmatrix} & \succeq O.
	\end{align*}
	Finally, eliminating $z_1$ via
	\[
		\max\big\{0, 2a - \mu_1^\downarrow\big\} \leq z_1 \leq -2c + \sum_{j=2}^{2n}\mu_{j}^\downarrow
	\]
	gives
	\begin{align}\begin{split}\label{sdp:2n_exist_constraints_final}
		-2b & \leq \sum_{j=k}^{2n}\mu_{j}^\downarrow \quad \text{for} \quad k = 4,\ldots,2n \\
		2c-2b & \leq \sum_{j=3}^{2n}\mu_{j}^\downarrow \\
		2c & \leq \sum_{j=2}^{2n}\mu_{j}^\downarrow \\
		2a+2c & \leq \sum_{j=1}^{2n}\mu_{j}^\downarrow \\
		\begin{bmatrix}a & b \\ b & c\end{bmatrix} & \succeq O.
	\end{split}\end{align}
	
	To simplify this system of inequality even further, we we recall that $s_k = \sum_{j=k}^{2n} \mu_j^\downarrow$, which are exactly the right-hand-sides of the inequalities. Also, we can add the constraint $-2b \leq s_3$ without affecting anything, since it is implied by the existing constraint $2c - 2b \leq s_3$. Then we can replace the set of constraints $-2b \leq s_k$ for $k = 3, 4, \ldots, 2n$ by the single constraint $-2b \leq s_{-}$, since $s_{-} = \min_k\{s_k\}$, and we know that this minimum is not attained when $k = 1$ or $k = 2$ since the third and fourth inequalities in~\eqref{sdp:2n_exist_constraints_final} tell us that $s_1 \geq s_2 \geq 0$. After making these simplifications, the system of inequalities has the form
	\begin{align}\begin{split}\label{sdp:2n_exist_constraints_finalb}
		-2b & \leq s_{-} \\
		2c-2b & \leq s_3 \\
		2c & \leq s_2 \\
		2a+2c & \leq s_1 \\
		\begin{bmatrix}a & b \\ b & c\end{bmatrix} & \succeq O.
	\end{split}\end{align}
	To finally get this system of inequalities into the form described by part~(c) of the theorem, we define $X = \begin{bmatrix}
		2a & -2b \\ -2b & 2c
	\end{bmatrix}$ and note that $X$ is positive semidefinite if and only if $\begin{bmatrix}
	a & b \\ b & c
	\end{bmatrix}$ is positive semidefinite. This completes the proof of the equivalence of~(b) and~(c).
	
	We now show that conditions (c) and (d) of the theorem are equivalent. The proof of this fact is similarly in flavor to the proof of equivalence of (b) and (c)---we just use quantifier elimination techniques to eliminate the variables $a$, $b$, and $c$ from the system of inequalities~\eqref{sdp:2n_exist_constraints_finalb}. First, note that the positive semidefinite constraint is equivalent to $a,c,ac-b^2 \geq 0$. However, note that if either $a = 0$ or $c = 0$ then $b = 0$, in which case the system is simply equivalent to the requirement that $\mu_j \geq 0$ for all $j$, which implies that the inequalities in condition~(d) hold. Thus we assume from now on that $a,c \neq 0$.
	
	Similarly, notice that if $b \leq 0$ then the first inequality in~\eqref{sdp:2n_exist_constraints_finalb} implies $\mu_{j} \geq 0$ for all $j$, which satisfies condition~(d) of the theorem. Thus we assume without loss of generality that $b > 0$. Well, if that system of inequalities is true for a particular $a > 0$ then certainly it is true if we increase $a$ even further (subject to the contraint $2a+2c \leq s_1$). Thus we may in fact assume that $a = s_1/2 - c$. After making these simplifications, the system of inequalities has the form
	\begin{align}\begin{split}\label{sdp:2n_exist_c_to_d_1}
	-2b & \leq s_{-} \\
	2c-2b & \leq s_3 \\
	2c & \leq s_2 \\
	b^2 & \leq c(s_1/2 - c).
	\end{split}\end{align}
	
	Again, we might as well increase $b$ up to $\sqrt{c(s_1/2-c)}$, so we assume this from now on. This gives us the equivalent system
	\begin{align}\begin{split}\label{sdp:2n_exist_c_to_d_2}
		-\sqrt{2c(s_1-2c)} & \leq s_{-} \\
		2c-\sqrt{2c(s_1-2c)} & \leq s_3 \\
		2c & \leq s_2.
	\end{split}\end{align}
	
	Well, the first inequality above is equivalent to
	\[
		2c(s_1-2c) \geq s_{-}^2,
	\]
	which is a quadratic equation in $c$ that is equivalent to the quadratic having two real roots, and $c$ being between those roots: 
	\begin{align*}
		-2s_{-} & \leq s_1 \\
		s_1 - \sqrt{s_1^2 - 4s_{-}^2} & \leq 4c \leq s_1 + \sqrt{s_1^2 - 4s_{-}^2}
	\end{align*}
	The second inequality in the system~\eqref{sdp:2n_exist_c_to_d_2} is a bit more inconvenient to deal with, since $s_3-2c$ might be positive or negative, so we have to be more careful when squaring the inequality. Specifically, we find that it is equivalent to
	\begin{align*}
		2c \leq s_3 \quad \text{or} \quad 2c(s_1-2c) & \geq (2c - s_3)^2.
	\end{align*}
	By expanding the quadratic and solving for $c$, we see that it is equivalent to
	\begin{align*}
		(s_1 + 2s_3)^2 \geq 8s_3^2 \quad \text{and} \quad (s_1 + 2s_3) - \sqrt{(s_1+2s_3)^2 - 8s_3^2} \leq 8c \leq (s_1 + 2s_3) + \sqrt{(s_1+2s_3)^2 - 8s_3^2}.
	\end{align*}
	Thus the system of inequalities~\eqref{sdp:2n_exist_c_to_d_2} is equivalent to \emph{at least one} of the following two systems of inequalities holding:
	\begin{align}\begin{split}\label{sdp:2n_exist_c_to_d_part1_1}
	-2s_{-} & \leq s_1 \\
	s_1 - \sqrt{s_1^2 - 4s_{-}^2} \leq 4c & \leq s_1 + \sqrt{s_1^2 - 4s_{-}^2} \\
	2c & \leq s_2 \\
	2c & \leq s_3 \\
	\end{split}\end{align}
	or
	\begin{align}\begin{split}\label{sdp:2n_exist_c_to_d_part2_1}
	-2s_{-} & \leq s_1 \\
	s_1 - \sqrt{s_1^2 - 4s_{-}^2} \leq 4c & \leq s_1 + \sqrt{s_1^2 - 4s_{-}^2} \\
	2c & \leq s_2 \\
	8s_3^2 & \leq (s_1 + 2s_3)^2 \\
	(s_1 + 2s_3) - \sqrt{(s_1+2s_3)^2 - 8s_3^2} \leq 8c & \leq (s_1 + 2s_3) + \sqrt{(s_1+2s_3)^2 - 8s_3^2}.
	\end{split}\end{align}
	
	Well, to eliminate $c$ from the system of inequalities~\eqref{sdp:2n_exist_c_to_d_part1_1}, we note that $2c \leq s_3$ implies $s_3 \geq 0$, so $\mu_3^\downarrow \geq 0$, so $s_2 \geq s_3$, which implies that we can discard the $2c \leq s_2$ inequality, as it is redundant. Then when we ``squeeze out'' $4c$, we are left with the equivalent system
	\begin{align}\begin{split}\label{sdp:2n_exist_c_to_d_part1_2}
		-2s_{-} & \leq s_1 \\
		s_1 - \sqrt{s_1^2 - 4s_{-}^2} & \leq 2s_3.
	\end{split}\end{align}
	
	On the other hand, if we ``squeeze out'' $c$ from the system of inequalities~\eqref{sdp:2n_exist_c_to_d_part2_1}, we are left with the much uglier system (where we recall that $q_1 := s_1^2 - 4s_{-}^2$ and $q_2 := (s_1+2s_3)^2 - 8s_3^2$):
	\begin{align}\begin{split}\label{sdp:2n_exist_c_to_d_part2_2}
		q_1,q_2 & \geq 0 \\
		\sqrt{q_1} & \geq s_1 - 2s_2 \\
		\sqrt{q_2} & \geq s_1 - 4s_2 + 2s_3 \\
		2\sqrt{q_1} + \sqrt{q_2} & \geq |s_1 - 2s_3|.
	\end{split}\end{align}
	
	Thus condition~(c) of the theorem is equivalent to \emph{at least one} of the systems of inequalities~\eqref{sdp:2n_exist_c_to_d_part1_2} and~\eqref{sdp:2n_exist_c_to_d_part2_2} holding. To remove the absolute value bars from system~\eqref{sdp:2n_exist_c_to_d_part2_2}, we first note that if $s_3 < 0$ then $2s_3 - s_1 < 0$, so the inequality $2\sqrt{q_1} + \sqrt{q_2} \geq 2s_3 - s_1$ is trivial. On the other hand, if $s_3 \geq 0$ then we observe that $s_1 \geq s_2 \geq 0$, which implies $\mu_2^\downarrow \geq 0$, so $s_1 \geq s_3$. Thus $q_2 = (s_1 + 2s_3)^2 - 8s_3^2 = s_1^2 + 4s_3(s_1 - s_3) \geq s_1^2$. It follows that
	\[
		2\sqrt{q_1} + \sqrt{q_2} \geq \sqrt{q_2} \geq s_1 \geq s_1 + 2(s_3 - s_1) = 2s_3 - s_1,
	\]
	as desired.
	
	To complete the proof, we need to show that the system of inequalities~\eqref{sdp:2n_exist_c_to_d_part1_2} holding implies that the system~\eqref{sdp:2n_exist_c_to_d_part2_2} holds, so we can completely discard the system~\eqref{sdp:2n_exist_c_to_d_part1_2}. Well, if $-2s_{-} \leq s_1$ then $q_1 \geq 0$. The second inequality in~\eqref{sdp:2n_exist_c_to_d_part1_2} says exactly that $\sqrt{q_1} \geq s_1 - 2s_3$, and since $s_2 \geq s_3$ this implies both $\sqrt{q_1} \geq s_1 - 2s_2$ and $2\sqrt{q_1} + \sqrt{q_2} \geq s_1 - 2s_3$. All that remains is to show that $q_2 \geq 0$ and $\sqrt{q_2} \geq s_1 - 4s_2 + 2s_3$.
	
	Well, to see that $q_2 \geq 0$, we note that $s_1 - \sqrt{s_1^2 - 4s_{-}^2} \leq 2s_3$ implies $s_3 \geq 0$, in which case we have $q_2 = s_1^2 + 4s_3(s_1-s_3) \geq s_1^2 \geq 0$. Thus we also conclude that $\sqrt{q_2} \geq s_1 \geq s_1 - 2s_2 \geq s_1 - 2s_2 - 2(s_2 - s_3) = s_1 - 4s_2 + 2s_3$, which completes the proof that the system~\eqref{sdp:2n_exist_c_to_d_part1_2} implies the system~\eqref{sdp:2n_exist_c_to_d_part2_2}. This finally gets us to the form described in part~(d) of the theorem, and completes the proof.
\end{proof}

Notice in particular that conditions~(c) and~(d) provide necessary conditions that the spectra of entanglement witnesses in $\textup{BP}_{2,n}$ must satisfy. However, because $\sigma(\textup{BP}_{2,n})$ is not convex (see Example~\ref{exam:eigs_not_convex}), these conditions are not sufficient. A comparison of these necessary conditions with the exact result of Theorem~\ref{thm:2x2_eigs} in the two-qubit case is provided by Figure~\ref{fig:two_qubit_region}.


\begin{figure}[htb]
	\centering
	\includegraphics[width=7cm]{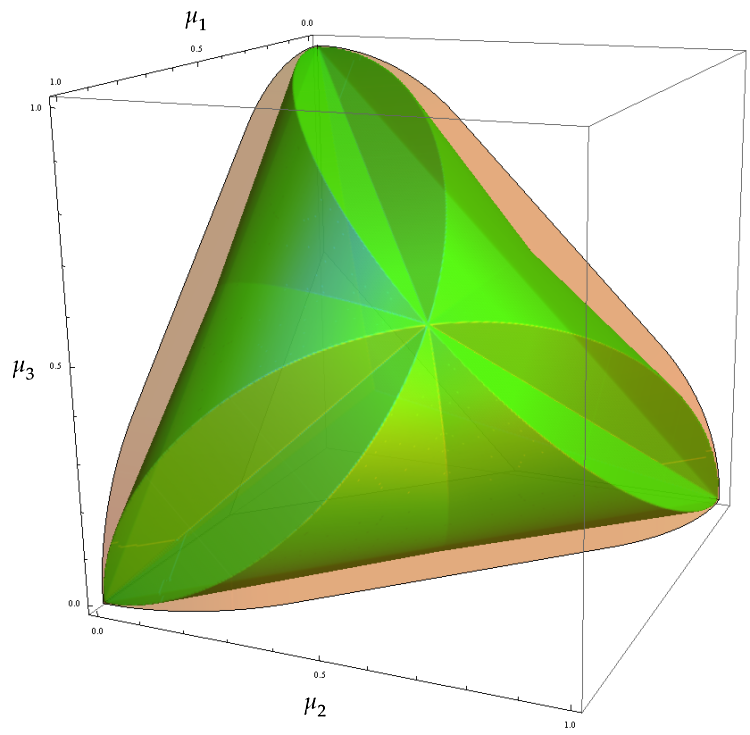}
	\caption{The region of tuples $(\mu_1,\mu_2,\mu_3) \in \mathbb{R}^3$ with the property that $(\mu_1,\mu_2,\mu_3,\mu_4)$ is the spectrum of an entanglement witness, where $\mu_4 = 1-\mu_1-\mu_2-\mu_3$, is displayed in green (as in Figure~\ref{fig:two_qubit_ew}). The orange region depicts the extra points that are not spectra of entanglement witnesses, but cannot be ruled out by the necessary conditions of Theorem~\ref{thm:mu2n_characterize}. The orange region looks like a convex ``shield'' in front of the non-convex green region.}\label{fig:two_qubit_region}
\end{figure}

\begin{example}\label{ex:23_dim_eg}
	Consider the vector of eigenvalues $\vec{\mu} = ((1+\sqrt{5}/2), (1+\sqrt{5}/2), 1, 1, c, c)$ as the spectrum of an entanglement witness in $M_2(\mathbb{C}) \otimes M_3(\mathbb{C})$, where $0 > c \in \mathbb{R}$. We can determine a bound on the most negative eigenvalue $c$ by plugging $\vec{\mu}$ into part~(d) of Theorem~\ref{thm:mu2n_characterize}. In this case, the tightest restrictions on $c$ come from the inequality $q_1 \geq 0$, so this is the only one we explicitly work through. First, we compute $q_1$ in terms of $c$:
	\begin{align*}
		q_1 &= s_1^2 - 4s_-^2 = 4(c^2 + (3+\sqrt{5})c +(14 +6 \sqrt{5})/4) - 16c^2 \\
		&= -12c^2 + 4(3+\sqrt{5})c +14 + 6\sqrt{5} \geq 0
	\end{align*}
	The above quadratic inequality limits $c$ to being between its two roots, which are
	\begin{align*}
		\frac{ -4(3+\sqrt{5}) \pm \sqrt{16(14+\sqrt{5}) -4(-12)(14+6\sqrt{5}) } }{2(-12)} = \frac{ (3+\sqrt{5}) \pm 2(3+\sqrt{5}) }{6}.
	\end{align*}
	It follows that $c$ cannot be smaller than the lesser of these two roots---i.e., $\vec{\mu} \in \mathrm{Conv}\big(\sigma(\textup{BP}_{2,3})\big)$ if and only if $c \geq -(3+\sqrt{5})/6$.
	
	Alternatively, we can use part~(c) of Theorem~\ref{thm:mu2n_characterize} to see that $\vec{\mu} \in \mathrm{Conv}\big(\sigma(\textup{BP}_{2,3})\big)$ when $c \geq -(3+\sqrt{5})/6$, since in this case we can choose the positive semidefinite matrix $X \in M_2(\mathbb{R})$ to be
	\[
		X = \begin{bmatrix}
			(1+\sqrt{5})/2 & c \\ c & (1+\sqrt{5})/2 + 2c + 2
		\end{bmatrix},
	\]
	and it is straightforward to verify this matrix satisfies the conditions of part~(c) of the theorem.
	
	It is worth noting that we explicitly constructed an entanglement witness with spectrum $\vec{\mu}$ when $c = (1-\sqrt{5})/2$ in Example~\ref{ex:23_2_neg}. However, we have now shown that the minimal value of $c$ is $-(3+\sqrt{5})/6 \approx -0.8727 < -0.6180 \approx (1-\sqrt{5})/2$. We have not been able to explicitly construct an entanglement witness with spectrum $\vec{\mu}$ when $c = -(3+\sqrt{5})/6$, and in fact one might not even exist since $\sigma(\textup{BP}_{2,3}) \neq \mathrm{Conv}\big(\sigma(\textup{BP}_{2,3})\big)$. However, we \emph{can} explicitly demonstrate that $\vec{\mu} \in \mathrm{Conv}\big(\sigma(\textup{BP}_{2,3})\big)$ in this case, since Lemma~\ref{lem:eigenvalues_rank1_pt} tells us that $(1,1,1,0,0,-1) \in \sigma(\textup{BP}_{2,3})$, so
	\[
		(1,1,1,0,0,-1) + (1,1,0,1,-1,0) = (2,2,1,1,-1,-1) \in \mathrm{Conv}\big(\sigma(\textup{BP}_{2,3})\big),
	\]
	which implies that if $c = -(3+\sqrt{5})/6$ then
	\begin{align*}
		\frac{\sqrt{5}}{3}(2,2,1,1,-1,-1) + \frac{3-\sqrt{5}}{6}(1,1,2,2,-1,-1) & \\
		= ((1+\sqrt{5})/2,(1+\sqrt{5})/2,1,1,c,c) & \in \mathrm{Conv}\big(\sigma(\textup{BP}_{2,3})\big).
	\end{align*}
\end{example}

\section{Higher-Dimensional Entanglement Witnesses}\label{sec:high_dim_ews}

We now consider the case of entanglement witnesses in arbitrary dimensions. This case is much more difficult for two reasons. First, we do not know whether or not $\textup{ASEP}_{m,n} = \textup{APPT}_{m,n}$ when $m,n \geq 3$ \cite{AJR15}, and thus we similarly do not know that $\mathrm{Conv}\big(\sigma(\textup{BP}_{m,n})\big) = \mathrm{Conv}\big(\sigma(\textup{DBP}_{m,n})\big)$. Second, the characterization of $\textup{APPT}_{m,n}$ is quite complicated and requires further explanation when $\min\{m,n\} \geq 3$. For simplicity, we assume for the remainder of this section that $m \leq n$.

We now provide more details about the characterization of absolutely PPT states, but for a full and rigorous description, we refer the interested reader to \cite{Hil07}. We start by constructing several linear maps $L_j : M_m(\mathbb{R}) \rightarrow \mathbb{R}^{mn}$. To illustrate how these linear maps $\{L_j\}$ are constructed, recall from Lemma~\ref{lem:eigenvalues_rank1_pt} that if a pure state $\ket{v} \in \mathbb{C}^m \otimes \mathbb{C}^n$ has Schmidt coefficients $\alpha_1 \geq \alpha_2 \geq \cdots \geq \alpha_m \geq 0$, then the eigenvalues of $(\ketbra{v}{v})^\Gamma$ are the numbers $\alpha_j^2$ ($1 \leq j \leq m$), $\pm\alpha_i\alpha_j$ ($1 \leq i \neq j \leq m$), and possibly some extra zeroes. Well, let's consider the possible orderings of those numbers. For example, if $m = 2$ then the only possible ordering is $\alpha_1^2 \geq \alpha_1\alpha_2 \geq \alpha_2^2 \geq 0 \geq \cdots \geq 0 \geq -\alpha_1\alpha_2$, whereas if $m = 3$ then there are two possible orderings:
\begin{align*}
	\alpha_1^2 & \geq \alpha_1\alpha_2 \geq \alpha_1\alpha_3 \geq \alpha_2^2 \geq \alpha_2\alpha_3 \geq \alpha_3^2 \geq 0 \geq \cdots \geq 0 \geq -\alpha_2\alpha_3 \geq -\alpha_1\alpha_3 \geq -\alpha_1\alpha_2 \quad \text{or} \\
	\alpha_1^2 & \geq \alpha_1\alpha_2 \geq \alpha_2^2 \geq \alpha_1\alpha_3 \geq \alpha_2\alpha_3 \geq \alpha_3^2 \geq 0 \geq \cdots \geq 0 \geq -\alpha_2\alpha_3 \geq -\alpha_1\alpha_3 \geq -\alpha_1\alpha_2.
\end{align*}

Well, for each of these orderings, we associate a linear map $L_j : M_m(\mathbb{R}) \rightarrow \mathbb{R}^{mn}$ by placing $\pm y_{i,j}$ into the position of $L_j(Y)$ where $\pm \alpha_i\alpha_j$ appears in the associated ordering (and actually $L_j$ is just a linear map on \emph{symmetric} matrices, not all matrices, so that we do not have to worry about distinguishing between $y_{i,j}$ and $y_{j,i}$). For example, in the $m = 2$ case, there is just one linear map $L_1 : M_2(\mathbb{R}) \rightarrow \mathbb{R}^{2n}$, and it is
\[
	L_1\left(\begin{bmatrix}y_{1,1} & y_{1,2} \\ y_{1,2} & y_{2,2}\end{bmatrix}\right) = (y_{1,1},y_{1,2},y_{2,2},0,\ldots,0,-y_{1,2}).
\]
Similarly, in the $m = 3$ case there are two linear maps $L_1,L_2 : M_3(\mathbb{R}) \rightarrow \mathbb{R}^{3n}$, and they are
\begin{align}\begin{split}\label{eq:3x3_L_maps}
	L_1\left(\begin{bmatrix}y_{1,1} & y_{1,2} & y_{1,3} \\ y_{1,2} & y_{2,2} & y_{2,3} \\ y_{1,3} & y_{2,3} & y_{3,3}\end{bmatrix}\right) & = (y_{1,1},y_{1,2},y_{1,3},y_{2,2},y_{2,3},y_{3,3},0,\ldots,0,-y_{2,3},-y_{1,3},-y_{1,2}) \quad \text{and} \\
	L_2\left(\begin{bmatrix}y_{1,1} & y_{1,2} & y_{1,3} \\ y_{1,2} & y_{2,2} & y_{2,3} \\ y_{1,3} & y_{2,3} & y_{3,3}\end{bmatrix}\right) & = (y_{1,1},y_{1,2},y_{2,2},y_{1,3},y_{2,3},y_{3,3},0,\ldots,0,-y_{2,3},-y_{1,3},-y_{1,2}).
\end{split}\end{align}

The number of distinct possible orderings (and thus the number of linear maps to be considered) grows exponentially in $m$. For $m = 2, 3, 4, \ldots$, this quantity equals $1, 2, 10, 114, 2608, 107498, \ldots$, though no formula is known for computing it in general \cite{oeisA237749}.

The main result of \cite{Hil07} says that $\vec{\lambda} \in \textup{APPT}_{m,n}$ if and only if $L_j^*(\vec{\lambda}^\uparrow) \succeq O$ for all $j$, where $\vec{\lambda}^\uparrow$ is the vector with the same entries as $\vec{\lambda}$ sorted in \emph{non-decreasing} order (rather than the usual non-increasing order we have used previously in this paper). For example, when $m = n = 3$, the result says that $\vec{\lambda} \in \textup{APPT}_{3,3}$ if and only if $L_1^*(\vec{\lambda}^\uparrow) \succeq O$ and $L_2^*(\vec{\lambda}^\uparrow) \succeq O$, where
\begin{align*}
L_1^*(\vec{\lambda}^\uparrow) = \begin{bmatrix}
2\lambda_9^\downarrow & \lambda_8^\downarrow - \lambda_1^\downarrow & \lambda_7^\downarrow - \lambda_2^\downarrow \\
\lambda_8^\downarrow - \lambda_1^\downarrow & 2\lambda_6^\downarrow & \lambda_5^\downarrow - \lambda_3^\downarrow \\
\lambda_7^\downarrow - \lambda_2^\downarrow & \lambda_5^\downarrow - \lambda_3^\downarrow & 2\lambda_4^\downarrow
\end{bmatrix} \quad \text{and} \quad L_2^*(\vec{\lambda}^\uparrow) = \begin{bmatrix}
2\lambda_9^\downarrow & \lambda_8^\downarrow - \lambda_1^\downarrow & \lambda_6^\downarrow - \lambda_2^\downarrow \\
\lambda_8^\downarrow - \lambda_1^\downarrow & 2\lambda_7^\downarrow & \lambda_5^\downarrow - \lambda_3^\downarrow \\
\lambda_6^\downarrow - \lambda_2^\downarrow & \lambda_5^\downarrow - \lambda_3^\downarrow & 2\lambda_4^\downarrow
\end{bmatrix}.
\end{align*}

The last tool that we need to be able to state our characterization of $\mathrm{Conv}\big(\sigma(\textup{DBP}_{m,n})\big)$ is a function $p : \mathbb{R}^n \rightarrow \mathbb{R}^n$ that computes the partial sums of a vector:
\[
	p(\vec{v}) \defeq \left(\sum_{j=1}^n v_j, \sum_{j=2}^n v_j, \sum_{j=3}^n v_j, \ldots, \sum_{j=n-1}^n v_j, v_n\right).
\]
\begin{theorem}\label{thm:mumn_characterize}
	Suppose $\vec{\mu} \in \mathbb{R}^{mn}$ and $m \leq n$. Then the following are equivalent:
	\begin{enumerate}
		\item[a)] $\vec{\mu} \in \mathrm{Conv}\big(\sigma(\textup{DBP}_{m,n})\big)$.
		
		\item[b)] There exist positive semidefinite matrices $Y_1,Y_2,\ldots \in M_m(\mathbb{R})$ such that
		\begin{align*}
			\sum_j p\big(L_j(Y_j)\big) \leq p\big(\mu^\downarrow\big).
		\end{align*}
	\end{enumerate}
\end{theorem}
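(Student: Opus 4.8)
The plan is to mirror the structure of the proof of Theorem~\ref{thm:mu2n_characterize}, but now starting from the multi-map characterization of $\textup{APPT}_{m,n}$ due to \cite{Hil07} rather than the single $2\times 2$ pencil condition. By Corollary~\ref{cor:conv_hull_duality_appt}, $\vec{\mu} \in \mathrm{Conv}\big(\sigma(\textup{DBP}_{m,n})\big)$ if and only if $\vec{\mu} \in \textup{APPT}_{m,n}^\circ$, i.e. if and only if $\langle \vec{\mu}, \vec{\lambda}\rangle \geq 0$ for every $\vec{\lambda} \in \textup{APPT}_{m,n}$. The first step is to set this up as a semidefinite program: minimize $\sum_j \mu_j^\downarrow \lambda_{mn+1-j}$ subject to $L_j^*(\vec{\lambda}^\uparrow) \succeq O$ for all $j$, together with the normalization $\sum_j \lambda_j = 1$ and the ordering constraints $\lambda_1 \geq \cdots \geq \lambda_{mn} \geq 0$ (the ordering constraints are what make the single-vector condition on sorted entries into a genuine linear feasibility problem, and the normalization, as before, does not affect the sign of the optimum but makes strict feasibility easy to exhibit). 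I would then verify Slater's conditions (Theorem~\ref{thm:slater}) by writing down explicit strictly feasible points of the primal (e.g. $\lambda_j = (c - j\varepsilon)/(\text{const})$ for large $c$, small $\varepsilon$, which makes every $L_j^*$ strictly diagonally dominant hence positive definite) and of the dual (push the objective variable to $-\infty$), so that strong duality holds and $\vec{\mu} \in \mathrm{Conv}\big(\sigma(\textup{DBP}_{m,n})\big)$ if and only if the dual SDP has a feasible point with objective value $0$.

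The second, and main, step is to compute the dual SDP explicitly and massage it into the stated form. The dual will have one positive semidefinite matrix variable $Y_j \in M_m(\mathbb{R})$ dual to each constraint $L_j^*(\vec{\lambda}^\uparrow) \succeq O$, plus nonnegative scalar multipliers $z_1, \ldots, z_{mn-1}$ dual to the consecutive ordering inequalities $\lambda_k \geq \lambda_{k+1}$, plus a free scalar $d$ dual to the normalization. Taking adjoints, the contribution of $Y_j$ to the $k$-th dual inequality is exactly the $k$-th entry of $L_j(Y_j)$ (this is the point of defining the $L_j$'s as maps \emph{into} $\mathbb{R}^{mn}$ with $\pm y_{i,j}$ in the slots where $\pm\alpha_i\alpha_j$ sits in the $j$-th ordering), so the $k$-th dual constraint reads $\sum_j [L_j(Y_j)]_k - z_k + z_{k-1} + d \leq \mu_k^\downarrow$. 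Then, exactly as in the proof of Theorem~\ref{thm:mu2n_characterize}, I would eliminate the ordering multipliers $z_k$ one at a time via the ``squeeze out'' identity: a variable $z_k$ appearing only through $z_{k-1} - z_k \leq (\text{stuff}_k)$ and $z_k - z_{k+1} \leq (\text{stuff}_{k+1})$ with $z_k \geq 0$ can be eliminated, and telescoping the resulting inequalities converts the sequence of per-coordinate constraints into constraints on the \emph{partial sums} $\sum_{j \geq k}\mu_j^\downarrow$ against $\sum_{j\geq k}[L_i(Y_i)]_j$. Recognizing these partial sums as the components of $p(\vec{\mu}^\downarrow)$ and $\sum_j p(L_j(Y_j))$, and setting $d = 0$, yields condition~(b) exactly; the $z_k \geq 0$ constraints are precisely what forces the partial-sum inequality to hold coordinatewise rather than only in aggregate, which is why the vector inequality $\sum_j p(L_j(Y_j)) \leq p(\vec{\mu}^\downarrow)$ appears.

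The step I expect to be the main obstacle is the bookkeeping in the elimination of the $z_k$'s when $m \geq 3$: unlike the $m=2$ case where a single map $L_1$ contributes to each coordinate in a predictable pattern, here several maps $L_j$ each contribute different entries of the symmetric matrix $Y_j$ to a given coordinate $k$, and the ``positions'' of these contributions differ from one $L_j$ to another because the orderings differ. One must check that the telescoping still works uniformly — i.e. that after summing the dual inequalities from coordinate $k$ to coordinate $mn$, the left side is exactly $\sum_j \big(\sum_{\ell \geq k}[L_j(Y_j)]_\ell\big) = \sum_j [p(L_j(Y_j))]_k$ with no leftover $z$-terms except the sign-constrained $z_{k-1}$ that gets squeezed out at the next stage, and that the positive semidefinite constraints $Y_j \succeq O$ survive the elimination untouched (they do, since no $z_k$ or $d$ enters the matrix block). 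A secondary subtlety is the orientation convention: the characterization of \cite{Hil07} is phrased with $\vec\lambda^\uparrow$ while the primal objective pairs $\mu_j^\downarrow$ with $\lambda_{mn+1-j}$, so one should track carefully that the net effect is the clean statement in terms of $p(\vec\mu^\downarrow)$ and $L_j(Y_j)$ with no order-reversal, which it does because reindexing $\lambda$ by $j \mapsto mn+1-j$ turns the $L_j^*$ pencils acting on $\vec\lambda^\uparrow$ into the $L_j$ maps acting on $Y_j$ as written.
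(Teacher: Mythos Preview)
Your proposal is correct and follows essentially the same approach as the paper: set up the primal SDP over $\textup{APPT}_{m,n}$ via Corollary~\ref{cor:conv_hull_duality_appt} and Hildebrand's characterization, pass to the dual (with PSD variables $Y_j$ dual to each $L_j^*$ constraint, nonnegative multipliers $z_k$ dual to the ordering constraints, and a scalar $d$ dual to the normalization), verify Slater's conditions exactly as in Theorem~\ref{thm:mu2n_characterize}, and then eliminate the $z_k$'s by the same telescoping/``squeeze out'' argument to convert the coordinatewise dual constraints into the partial-sum inequalities $\sum_j p(L_j(Y_j)) \leq p(\vec{\mu}^\downarrow)$. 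The paper's proof is terser---it simply says ``applying the same quantifier elimination techniques from the proof of Theorem~\ref{thm:mu2n_characterize}'' rather than spelling out the bookkeeping you flag as the main obstacle---but the logical skeleton is identical.
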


Before proving this theorem, we note that in the $m = 2$ case, it reduces exactly to the equivalence of conditions~(b) and~(c) in Theorem~\ref{thm:mu2n_characterize}. Also, the in previous sections we had defined a vector $\vec{s}$ containing that partial sums of $\mu^\downarrow$---this vector $\vec{s}$ is exactly $p(\mu^\downarrow)$.

\begin{proof}
	Again, we start by using Corollary~\ref{cor:conv_hull_duality_appt} to note that $\vec{\mu} \in \mathrm{Conv}\big(\sigma(\textup{DBP}_{m,n})\big)$ if and only if $\vec{\mu} \in \textup{APPT}_{m,n}^\circ$, if and only if the optimal value of the following semidefinite program is non-negative:
	\begin{align}\begin{split}\label{sdp:mn_primal}
	\text{minimize: } & \ \sum_{j=1}^n \mu_j^\downarrow \lambda_{n+1-j} \\
	\text{subject to: } & \ L_j^*(\vec{\lambda}^\uparrow) \succeq O \quad \text{for all $j$} \\
	& \ \sum_{j=1}^{mn} \lambda_j = 1 \\
	& \ \lambda_1 \geq \lambda_2 \geq \cdots \geq \lambda_{mn} \geq 0.
	\end{split}\end{align}
	
	The dual of this semidefinite program is as follows:
	\begin{align}\begin{split}\label{sdp:mn_dual}
		\text{maximize: } & \ d \\
		\text{subject to: } & \ \sum_j L_j(Y_j) - (\vec{z},0) + (0,\vec{z}) + d\vec{1} \leq \vec{\mu}^\downarrow \\
		& \ 0 \leq d \in \mathbb{R}, \quad \vec{0} \leq \vec{z} \in \mathbb{R}^{mn-1} \\
		& \ O \preceq Y_j \in M_m(\mathbb{R}) \quad \text{for all $j$},
	\end{split}\end{align}
	where $\vec{1} \in \mathbb{R}^{mn}$ is the vector with each entry equal to $1$. Well, strong duality holds for this semidefinite program for reasons almost identical to those provided in the proof of Theorem~\ref{thm:mu2n_characterize}, so we just want to determine whether or not this dual problem has a feasible point with $d \geq 0$. In other words, we want to know whether or not there exists $\vec{0} \leq \vec{z} \in \mathbb{R}^{mn-1}$ and positive semidefinite matrices $\{Y_j\} \subseteq M_m(\mathbb{R})$ such that
	\begin{align}\label{sdp:mn_dual_exists}
		\sum_j L_j(Y_j) - (\vec{z},0) + (0,\vec{z}) \leq \vec{\mu}^\downarrow.
	\end{align}
	Well, applying the same quantifier elimination techniques from the proof of Theorem~\ref{thm:mu2n_characterize} to the vector $\vec{z}$ shows that inequality~\eqref{sdp:mn_dual_exists} is equivalent to the existence of positive semidefinite matrices $\{Y_j\} \subseteq M_m(\mathbb{R})$ such that
	\begin{align}\label{sdp:mn_dual_exists_done}
		\sum_j p(L_j(Y_j)) \leq p(\vec{\mu}^\downarrow),
	\end{align}
	which completes the proof.
\end{proof}	

For example, in the $m = n = 3$ case, we recall the two maps $L_1,L_2$ from Equation~\eqref{eq:3x3_L_maps}. After working through the details, we see that Theorem~\ref{thm:mumn_characterize} says that $\vec{\mu} \in \mathrm{Conv}\big(\sigma(\textup{DBP}_{3,3})\big)$ if and only if there exist $3 \times 3$ positive semidefinite matrices $X,Y \in M_3(\mathbb{R})$ such that
\begin{align}\begin{split}\label{ineq:sys_33}
	(x_{1,1} + x_{2,2}+x_{3,3}) + (y_{1,1} + y_{2,2} + y_{3,3}) & \leq \mu_1^\downarrow + \mu_2^\downarrow +\cdots + \mu_9^\downarrow \\
	(x_{2,2}+x_{3,3}) + (y_{2,2} + y_{3,3}) & \leq \mu_2^\downarrow + \mu_3^\downarrow +\cdots + \mu_9^\downarrow \\
	(x_{2,2}+x_{3,3} - x_{1,2}) + (y_{2,2} + y_{3,3}-y_{1,2}) & \leq \mu_3^\downarrow + \mu_4^\downarrow +\cdots + \mu_9^\downarrow \\
	(x_{3,3} - x_{1,2}) + (y_{2,2} + y_{3,3}-y_{1,2} - y_{1,3}) & \leq \mu_4^\downarrow + \mu_5^\downarrow +\cdots + \mu_9^\downarrow \\
	(x_{3,3} - x_{1,2}-x_{1,3}) + (y_{3,3}-y_{1,2} - y_{1,3}) & \leq \mu_5^\downarrow + \mu_6^\downarrow +\cdots + \mu_9^\downarrow \\
	(x_{3,3} - x_{1,2}-x_{1,3}-x_{2,3}) + (y_{3,3}-y_{1,2} - y_{1,3} - y_{2,3}) & \leq \mu_6^\downarrow + \mu_7^\downarrow + \mu_8^\downarrow + \mu_9^\downarrow \\
	(-x_{1,2}-x_{1,3}-x_{2,3}) + (-y_{1,2} - y_{1,3} - y_{2,3}) & \leq \mu_7^\downarrow + \mu_8^\downarrow +\mu_9^\downarrow \\
	(-x_{1,2}-x_{1,3}) + (-y_{1,2} - y_{1,3}) & \leq \mu_8^\downarrow + \mu_9^\downarrow \\
	-x_{1,2} - y_{1,2} & \leq \mu_9^\downarrow \\
\end{split}\end{align}

Note in particular that the inequality involving $\mu_4^\downarrow + \cdots + \mu_9^\downarrow$ is not symmetric in $X$ and $Y$, and this inequality is the reason that we require \emph{two} positive semidefinite matrices $X$ and $Y$ rather than just one. In general, the number of inequalities to be checked is $\min\{m,n\}^2$, but the number of positive semidefinite matrices invovled (and the number of terms being added up in each inequality) grows exponentially in $\min\{m,n\}$ (as we discussed earlier).

While this system of inequalities is not the type of thing that we expect to be able to solve analytically like how we did in the qubit-qudit cases (condition~(d) of Theorem~\ref{thm:mu2n_characterize}), the existence of $X$ and $Y$ can be determined numerically via semidefinite programming (e.g., in the CVX package for MATLAB \cite{CVX}). Thus these inequalities can be used computationally as necessary conditions that eigenvalues of entanglement witnesses must satisfy.

\begin{example}\label{exam:3x3_case}
	Suppose $c \in \mathbb{R}$ and consider the vector of eigenvalues $\vec{\mu} = (1,1,1,1,1,1,-1,-1,c)$. To determine which values of $c$ result in $\vec{\mu} \in \mathrm{Conv}\big(\sigma(\textup{DBP}_{3,3})\big)$, we note that it is straightforward to verify that $\vec{\lambda} := (2,2,2,1,1,1,1,1,1)/12 \in \textup{APPT}_{3,3}$ and
	\begin{align*}
		\sum_{j=1}^9 \mu_j^\downarrow \lambda_{10-j} = (2c - 2 - 2 + 1 + 1 + 1 + 1 + 1 + 1)/12 = (c + 1)/6,
	\end{align*}
	which is negative when $c < -1$. We conclude from Corollary~\ref{cor:conv_hull_duality_appt} that $\vec{\mu} \not\in \textup{APPT}_{3,3}^\circ = \mathrm{Conv}\big(\sigma(\textup{DBP}_{3,3})\big)$ whenever $c < -1$ (and in particular, there does not exist a decomposable entanglement witness with eigenvalues $(1,1,1,1,1,1,-1,-1,c)$ when $c < -1$).
	
	On the other hand, we can see that $\vec{\mu} \in \mathrm{Conv}\big(\sigma(\textup{DBP}_{3,3})\big)$ whenever $c \geq -1$ via the system of inequalities~\eqref{ineq:sys_33}. In particular,
	\begin{align*}
		X = Y = \frac{1}{2}\begin{bmatrix}
			1 & 1 & 1 \\ 1 & 1 & 1 \\ 1 & 1 & 1
		\end{bmatrix}
	\end{align*}
	is positive semidefinite and satisfies the system of inequalities~\eqref{ineq:sys_33} when $c \geq -1$. By putting these two facts together, we see that $\vec{\mu} \in \mathrm{Conv}\big(\sigma(\textup{DBP}_{3,3})\big)$ if and only if $c \geq -1$.
	
	It is worth noting that we can also see directly that $\vec{\mu} \in \sigma(\textup{DBP}_{3,3})$ whenever $c \geq -1$, since if $\ket{v} = \ket{0}\otimes\ket{0} + \ket{1}\otimes\ket{1} + \ket{2}\otimes\ket{2}$ then $(\ketbra{v}{v})^\Gamma$ has eigenvalues $(1,1,1,1,1,1,-1,-1,-1)$, and we can increase the smallest eigenvalue by adding a suitable positive semidefinite matrix to $(\ketbra{v}{v})^\Gamma$.
\end{example}

Theorem~\ref{thm:mumn_characterize} only applies to decomposable entanglement witnesses, since we do not know whether or not $\textup{ASEP}_{m,n} = \textup{APPT}_{m,n}$ when $m,n \geq 3$. This result gives us a new avenue of approaching the absolute separability problem, since if we can find an entanglement witness with eigenvalues $\vec{\mu}$ such that $\vec{\mu} \not\in \mathrm{Conv}\big(\sigma(\textup{DBP}_{m,n})\big)$ then it follows that $\textup{ASEP}_{m,n} \subsetneq \textup{APPT}_{m,n}$. For example, in light of Example~\ref{exam:3x3_case} we know that if there exists an entanglement witness with eigenvalues $(1,1,1,1,1,1,-1,-1,c)$ for some $c < -1$, then $\textup{ASEP}_{3,3} \neq \textup{APPT}_{3,3}$.

\section{Conclusions}\label{sec:conclude}

We have introduced the inverse eigenvalue problem for the set of entanglement witnesses (or equivalently, the set of block-positive matrices). We completely solved this problem in the smallest-dimensional case of $M_2(\mathbb{C}) \otimes M_2(\mathbb{C})$ in Theorem~\ref{thm:2x2_eigs}, and we provided a strong set of necessary conditions for the $M_2(\mathbb{C}) \otimes M_n(\mathbb{C})$ in Theorem~\ref{thm:mu2n_characterize}(d). We then provided a general method for constructing necessary conditions for the spectra of \emph{decomposable} entanglement witnesses in arbitrary dimensions in Theorem~\ref{thm:mumn_characterize}, and we illustrated a duality relationship with absolute separability that provides a new line of attack for approaching the question of whether or not $\textup{ASEP}_{m,n} = \textup{APPT}_{m,n}$.

The most pressing open question resulting from this work is whether or not there exists an entanglement witness with spectrum $\vec{\mu} \not\in \mathrm{Conv}\big(\sigma(\textup{DBP}_{m,n})\big)$---finding such an entanglement witness would show that $\textup{ASEP}_{m,n} \subsetneq \textup{APPT}_{m,n}$. Another question that might be reasonably solvable is that of finding exact conditions for membership in $\sigma(\textup{BP}_{2,3})$. We know that every $W \in \textup{BP}_{2,3}$ can be written in the form $W = X^\Gamma + Y$, but the fact that $W$ can have two negative eigenvalues (and thus $X$ cannot be chosen to have rank $1$) makes it difficult to pin down exactly how negative its eigenvalues can be. For example, what are the restrictions on $c$ in Example~\ref{ex:23_dim_eg}? We showed in that example that $\vec{\mu} \in \mathrm{Conv}\big(\sigma(\textup{BP}_{2,3})\big)$ if and only if $c \geq -(3+\sqrt{5})/6 \approx -0.8727$. However, we have not been able to find an entanglement witness with spectrum $\vec{\mu}$ for any $c < (1-\sqrt{5})/2 \approx -0.6180$.\bigskip

\noindent\textbf{Acknowledgements.} N.J. was supported by {NSERC} Discovery Grant number RGPIN-2016-04003.

\bibliographystyle{ieeetr}
\bibliography{bib}

\end{document}